\pgfplotsset{compat=newest,unit code/.code={\si{#1}},plot coordinates/math parser=false,grid style={lightgray}, ylabel right/.style={
        after end axis/.append code={
            \node [rotate=90, anchor=north] at (rel axis cs:1,0.5) {#1};
        }   
    }}
\newcommand{\myitem}[1]{%
\item[#1]\protected@edef\@currentlabel{#1}%
}
\newtheorem{theo}{Theorem}
\newtheorem{lem}{Lemma}
\newtheorem{cor}{Corollary}
\newtheorem{remark}{Remark}
\newtheorem{assume}{Assumption}
\DeclareSIUnit{\belmilliwatt}{Bm}
\DeclareSIUnit{\dBm}{\deci\belmilliwatt}
\DeclareMathOperator*{\E}{\mathbb{E}}
\newcommand{\norm}[1]{\lVert#1\rVert}
\newcommand{\abs}[1]{\left\lvert#1\right\rvert}
\newcommand*\diff{\mathop{}\!\mathrm{d}}
\DeclareMathOperator*{\R}{\mathbb{R}}
\newcommand{\transp}{\text{T}}
\DeclareMathOperator*{\argmax}{arg\,max}
\let\originalleft\left
\let\originalright\right
\renewcommand{\left}{\mathopen{}\mathclose\bgroup\originalleft}
\renewcommand{\right}{\aftergroup\egroup\originalright}
\newcommand\figref[1]{Fig.~\ref{#1}}
\newcommand\tabref[1]{Table~\ref{#1}}
\newcommand\secref[1]{Sec.~\ref{#1}}
\newcommand{\eg}{e.g.,\xspace}
\newcommand{\ie}{i.e.,\xspace}
\newcommand{\capt}[1]{\mdseries{\emph{#1}}}
\newcommand{\safeopt}{\textsc{SafeOpt}\xspace}
\newcommand{\colsafe}{\textsc{CoLSafe}\xspace}
\def\BibTeX{{\rm B\kern-.05em{\sc i\kern-.025em b}\kern-.08em
    T\kern-.1667em\lower.7ex\hbox{E}\kern-.125emX}}
\newcommand{\mytitle}{\textbf{Accepted final version.}
To appear in \textit{IEEE Transactions on Automatic Control}.\\
\copyright 2025 IEEE. Personal use of this material is permitted. Permission
from IEEE must be obtained for all other uses, in any current or future
media, including reprinting/republishing this material for advertising or
promotional purposes, creating new collective works, for resale or
redistribution to servers or lists, or reuse of any copyrighted component of
this work in other works.}
\begin{document}
\title{Safety and optimality in learning-based control\\ at low computational cost}
\author{Dominik Baumann,~\IEEEmembership{Member,~IEEE,} Krzysztof Kowalczyk, Cristian R. Rojas,~\IEEEmembership{Senior Member,~IEEE,} Koen Tiels, ~\IEEEmembership{Member,~IEEE,} and Pawe\l{} Wachel~\IEEEmembership{Member,~IEEE}
\thanks{This work was supported in part by NWO VIDI 15698, ECSEL 101007311 (IMOCO4.E), the Swedish Research Council under contract number 2023-05170, and the Wallenberg AI, Autonomous Systems and Software Program (WASP) funded by the Knut and Alice Wallenberg Foundation.}
\thanks{Dominik Baumann is with the Cyber-physical Systems Group, Aalto University, Espoo, Finland, and the Department of Information Technology, Uppsala University, Uppsala, Sweden (e-mail: dominik.baumann@aalto.fi).}
\thanks{Krzysztof Kowalczyk and Pawe\l{} Wachel are with the Department of Control Systems and Mechatronics, Wroc\l{}aw University of Science and Technology, Wroc\l{}aw, Poland (e-mails: \{krzysztof.kowalczyk, pawel.wachel\}@pwr.edu.pl).}
\thanks{Cristian R. Rojas is with the Division of Decision and Control Systems, KTH Royal Institute of Technology, Stockholm, Sweden (e-mail: crro@kth.se).}
\thanks{Koen Tiels is with the Department of Mechanical Engineering, Eindhoven University of Technology, Eindhoven, The Netherlands (e-mail: k.tiels@tue.nl).}}

\maketitle
\thispagestyle{fancy}	
\pagestyle{empty}

\begin{abstract}
Applying machine learning methods to physical systems that are supposed to act in the real world requires providing safety guarantees.
However, methods that include such guarantees often come at a high computational cost, making them inapplicable to large datasets and embedded devices with low computational power.
In this paper, we propose \colsafe, a computationally lightweight safe learning algorithm whose computational complexity grows sublinearly with the number of data points.
We derive both safety and optimality guarantees and showcase the effectiveness of our algorithm on a seven-degrees-of-freedom robot arm.
\end{abstract}

\begin{IEEEkeywords}
Learning-based control, safe learning, Nadaraya-Watson estimator
\end{IEEEkeywords}

\section{Introduction}
\label{sec:introduction}
\IEEEPARstart{W}{hile} learning-based control, \ie learning high-performing control policies from data, has shown remarkable success in recent years~\cite{duan2016benchmarking,jiang2020learning}, theoretical analysis often remains challenging.
However, a theoretical analysis providing safety guarantees is indispensable for future applications like autonomous driving or smart manufacturing, where control systems operate close to and interact with humans.
The area of safe learning~\cite{brunke2022safe,hewing2020learning} addresses this challenge.
Given varying levels of model knowledge, various types of safety certificates can be provided for different algorithms.

The case in which we assume no prior knowledge about the system dynamics but require probabilistic safety guarantees has been most prominently investigated through safe Bayesian optimization algorithms such as \safeopt~\cite{sui2015safe,berkenkamp2016safe,berkenkamp2021bayesian}.
Assuming a parameterized policy and that at least one set of safe parameters is known, this method carefully explores the parameter space while guaranteeing safety with high probability.
A downside of this method is its computational footprint.
First, the algorithm requires us to discretize the parameter space.
An extension toward continuous parameter spaces exists~\cite{duivenvoorden2017constrained}; however, it comes without safety guarantees.
Other extensions aim to make the search process within the discretized parameter space more efficient while still providing guarantees~\cite{kirschner2019adaptive}, at least partially resolving the scalability challenge.
The second challenge in terms of computational complexity is the use of Gaussian process (GP) regression for function estimation.
As GP regression involves a matrix inversion, it scales, in its naive implementation, cubically with the number of data points---in our case, the parameters that have already been explored.
More efficient implementations exist but cannot structurally change the growth in complexity.
This scalability property makes these methods unfit for applications with large datasets.

In our prior work~\cite{baumann2023computationally}, we have provided an alternative to GP regression: the Nadaraya-Watson estimator.
The computational complexity of the Nadaraya-Watson estimator scales linearly with the number of data points and can be reduced to sublinear scaling.
At the same time, we have shown that it is possible to provide similar safety guarantees to those of \safeopt.
In this paper, we extend our prior work from~\cite{baumann2023computationally}.
In particular, we complement the safety with optimality guarantees, which requires significant novel theoretical analysis. 
We further remove the restriction to one-dimensional constraint functions that define safety violations, provide a more extensive simulation study investigating the different exploration behavior of \colsafe and \safeopt, and complement simulation studies with real experiments on a 7-degrees-of-freedom robot arm.
\section{Related work}
For a general introduction to and overview of the safe learning literature, we refer the reader to~\cite{brunke2022safe}.
Here, we focus on safe learning algorithms that leverage Bayesian optimization (BO)~\cite{mockus1978application} due to their high sample efficiency and, thus, better applicability to physical systems~\cite{antonova2017deep,calandra2016bayesian,marco2016automatic}.

Safety can often be formulated as constraints in optimization algorithms: optimize some reward or cost function while, for instance, keeping a minimum distance to obstacles.
Bayesian optimization under unknown constraints~\cite{hernandez2016general,Gelbart2014,gardner2014bayesian,gramacy2011opti,schonlau1998global,picheny2014stepwise,marco2021robot} addresses this problem.
While those approaches incorporate constraints into the optimization, they do accept constraint violations during exploration.
In contrast, we seek to guarantee safety already during exploration.

In response to this, the \safeopt algorithm~\cite{sui2015safe,berkenkamp2021bayesian} provides probabilistic safety guarantees.
In particular, it guarantees that, with high probability, no constraints will be violated.
Since its development, various extensions of \safeopt have been proposed, targeting, for instance, global exploration in case of multiple disjoint safe regions~\cite{baumann2021gosafe,sukhija2023gosafeopt}, effectiveness of the exploration strategy~\cite{wachi2018safe}, or adaptivity~\cite{konig2021applied}.
All these methods leverage GP regression to estimate the reward and constraint function.
While this allows for sample-efficient estimation, it comes with the discussed computational cost due to the matrix inversion.
Thus, scalability is a known problem of \safeopt and BO in general.
Some \safeopt extensions address this limited scalability~\cite{duivenvoorden2017constrained,sui2018stagewise,kirschner2019adaptive}.
However, those extensions still rely on GP regression.

In this work, we address the limited scalability of \safeopt by leveraging the Nadaraya-Watson estimator to estimate the reward and constraint functions.
The Nadaraya-Watson estimator has superior scaling behavior and, thus, effectively improves scalability.
It has been used in prior control-related works, especially in the system identification community~\cite{schuster1979contributions,juditsky1995nonlinear,ljung2006some,mzyk2020wiener}.
We are not aware of any prior work that uses the Nadaraya-Watson estimator for safe learning.
\section{Problem setting}
\label{sec:problem}

We consider a dynamical system
\[
    \diff x(t) = z(x(t),u(t)) \diff t
\]
with unknown system dynamics $z(\cdot)$, state space $x(t)\in\mathcal{X}\subseteq \R^n$ and input $u(t)\in\R^m$.
For this system, we want to learn a policy $\pi$, parameterized by policy parameters $a\in\mathcal{A}\subseteq\R^d$, that determines the control input $u(t)$ given the current system state $x(t)$. The set $\mathcal{A}$ is considered finite.
\begin{remark}
    In control applications, the parameters $a$ may, for instance, correspond to controller parameters, which are, naturally, continuous.
    Thus, in control, the set $\mathcal{A}$ is typically infinite.
    However, we can make it finite through an appropriate discretization, as has been done in many works that leverage Bayesian optimization for learning control policies~\cite{berkenkamp2016safe,berkenkamp2021bayesian,marco2016automatic,calandra2016bayesian,baumann2021gosafe,sukhija2023gosafeopt}.
\end{remark}
The quality of the policy is evaluated by an unknown reward function $f\colon \mathcal{A}\to\R$.
In general, we can solve these kinds of problems through reinforcement learning.
However, this would also mean that, during exploration, we would allow for arbitrary actions that might violate safety constraints.
We define safety in the form of constraint functions $g_i\colon \mathcal{A}\to\R^{m_i}$ whose norm should always be above a safety threshold $c_i$, with $i\in\mathcal{I}_\mathrm{g}=\{1,\ldots,q\}$. 
We can then write the constrained optimization problem as
\begin{equation}
    \label{eqn:constr_max}
    \max_{a\in\mathcal{A}} f(a) \text{ subject to }\norm{g_i(a)}\ge c_i\text{ for all }i\in\mathcal{I}_\mathrm{g},
\end{equation}
where $\norm{\cdot}$ here and in the following denotes the L$_2$-norm.
\begin{remark}
    In most literature on \safeopt and related methods, $g_i$ are defined as functions mapping to the reals, and the safety constraint reduces to $g_i>0$ for all $i\in\mathcal{I}_\mathrm{g}$.
    Our formulation enables us to include constraints like restricting a robot from accessing a region in a 3-dimensional space.
\end{remark}
Solving~\eqref{eqn:constr_max} without knowledge about the system dynamics and reward and constraint functions is generally unfeasible.
Thus, we need to make some assumptions.
Clearly, without any prior knowledge, we have no chance to choose parameters $a$ for a first experiment for which we can be confident that they are safe.
Thus, we assume that we have an initial set of safe parameters.
\begin{assume}
A set $S_0\subset\mathcal{A}$ of safe parameters is known. That is, for all parameters $a$ in $S_0$ we have $\norm{g_i(a)} \ge c_i$ for all $i\in\mathcal{I}_\mathrm{g}$ and $S_0\neq\varnothing$.
\label{ass:safe_seed}
\end{assume}
In robotics, for instance, we often have simulation models available.
As these models cannot perfectly capture the real world, they are insufficient to solve~\eqref{eqn:constr_max}.
However, they may provide us with a safe initial parametrization.
Generally, the set of initial parameters can have arbitrarily bad performance regarding the reward.
Thus, if we imagine a robot manipulator that shall reach a target without colliding with an obstacle, a trivial set of safe parameters would barely move the arm from its initial position.
While this would not solve the task, it would be sufficient as a starting point for our exploration.

Ultimately, we want to estimate $f$ and $g_i$ from data.
Thus, we require some measurements from both after we do experiments.
For notational convenience, let us introduce a selector function
\begin{equation}
    \label{eqn:selector_fcn}
    h(a,i) := \begin{cases}f(a) & \text{ if }i=0\\
    g_i(a) & \text{ if }i\in\mathcal{I}_\mathrm{g},
    \end{cases}
\end{equation}
where the measured quantity is denoted $\hat{h}(a,i) \in \mathbb{R}^{m_i}$, $i \in \mathcal{I}:=\{0\}\cup\mathcal{I}_\mathrm{g}$, and $m_0:=1$.
\begin{assume}
\label{ass:observation_model}
After each experiment with a fixed policy parameterization $a$, we receive measurements $\hat{h}(a,i) = h(a,i) + \omega_i$ for all $i\in\mathcal{I}_\mathrm{g}$, where $\omega_i$, with $i\in\{0,\ldots,q\}$, are zero-mean, conditionally $\sigma^2$-sub-Gaussian random variables of matching dimension.
\end{assume}
That is, we consider $\omega_i$ to be an $m_i$-dimensional real process adapted to the natural filtration $(\mathcal{F}_n)_{n \in \mathbb{N}_0}$ defined by the measurement equation, where $\mathcal{F}_n=\sigma(\hat{h}_{1:n}, a_{1:n})$, and $n\in\mathbb{N}$ is the current experiment.
The $\sigma^2$-sub-Gaussian assumption then implies that for all $n\in\mathbb{N}$ there exists a $\sigma > 0$ such that, for every $m_i$-dimensional real predictable sequence $(\gamma_n)$ (\ie $\gamma_n$ is $\mathcal{F}_{n-1}$-measurable),
\[
    \E\left\{ \exp \left( {\gamma}_n^\top {\omega}_{i,n} \right) \mid \mathcal{F}_{n-1} \right\} \le \exp \left( \frac{{\gamma}_n^\top {\gamma}_n \sigma^2}{2}\right).
\]

Lastly, we require a regularity assumption about the functions $f$ and $g_i$.
\begin{assume}
    The functions $f$ and $g_i$, for all $i\in\mathcal{I}_\mathrm{g}$, are Lipschitz-continuous with known Lipschitz constant $L<\infty$.
    \label{ass:smoothness_assumption}
\end{assume}
Assuming Lipschitz-continuity is relatively common in control and the safe learning literature.
\safeopt can provide its safety guarantees also without knowledge of the Lipschitz constant~\cite{bottero2022information}.
Instead, \safeopt requires a different regularity assumption.
It requires that $f$ and $g_i$ have a known upper bound in a reproducing kernel Hilbert space.
We argue that having an upper bound on the Lipschitz constant is more intuitive.
Generally, the Lipschitz constant can also be approximated from data~\cite{wood1996estimation}.

Note that we clearly could assume individual Lipschitz constants $L_i$.
However, for simplicity, we assume a common Lipschitz constant $L$, which would be the maximum of the individual $L_i$.
\section{Safe and optimal policy learning}
\label{sec:algorithm}

This section introduces \colsafe as an algorithm for safe and optimal policy learning.
We start by introducing the Nadaraya-Watson estimator and analyzing its theoretical properties before embedding the estimator into a safe learning algorithm.
Finally, we state the safety and optimality guarantees.

\subsection{Nadaraya-Watson estimator}
We use the Nadaraya-Watson estimator to estimate the reward function $f$ and the constraint functions $g_i$ given the noisy measurements we receive after doing an experiment.
In particular, the estimate of $h$ at iteration $n$ is given by
\begin{equation}
    \label{eqn:nadaraya-watson}
    \mu_{n}(a',i) := \sum_{t=1}^n\frac{K_\lambda(a', a_t)}{\kappa_n(a')}\hat{h}_t(a,i),
\end{equation}
where
\begin{align*}
    \kappa_n(a') &\coloneqq \sum_{t=1}^nK_\lambda(a',a_t),\\
    K_\lambda(a,a') &\coloneqq \frac{1}{c_K}K\left(\frac{\norm{a-a'}}{\lambda}\right).
\end{align*}
Here, $K$ is the kernel function,~$\lambda$ the bandwidth parameter,~$c_\mathrm{K}$ a constant, and~$\hat{h}_t(a,i)$ the measurements at iteration~$t$.
The kernel needs to meet the following assumption:
\begin{assume}
\label{ass:kernel}
There exist constants $0 < \chi_\mathrm{K} \leq c_\mathrm{K}<\infty$ such that the kernel $K\colon \mathbb{R}_0^+\to\mathbb{R}$ satisfies
%
$\chi_\mathrm{K}\le K(v)\le c_\mathrm{K}$ for all $v\le1$, and $K(v)=0$ for all $v>1$.
\end{assume}
There are numerous classical kernels that satisfy Assumption~\ref{ass:kernel}, such as the box kernel $K_\lambda(v)=\frac{1}{2}$ or the cosine kernel $K_\lambda(v)=\frac{\pi}{4}\cos(\frac{\pi}{2}v)$.

Both for safety and optimality guarantees, we require a bound on how close the Nadaraya-Watson estimate is to the underlying function.
The proofs of all technical results are collected in \secref{sec:proof}.
\begin{lem} 
\label{lem:bounds}
Consider Assumptions~\ref{ass:safe_seed}-\ref{ass:kernel}, where the bandwidth parameter $\lambda$ is a fixed constant. Assume that $|\mathcal{A}|\le D$. Then, for every $0 < \delta <1$, with probability larger than $1 - \delta$, for all $n \in \mathbb{N}$ and $a \in \mathcal{A}$,
\[
\norm{ \mu_n(a,i) - h(a, i) } \leq \beta_n(a,i)
\]
where
\begin{align*}
&\beta_n(a,i) = \\
&\;\; \begin{cases}
+\infty, \qquad \text{if } \kappa_n(a) = 0 \vspace{3pt} \\
\displaystyle L \lambda + \frac{2 \sigma}{\kappa_n(a)} \sqrt{\ln(\delta^{-1} 2^{\frac{m_i}{2}} D)}, \quad \text{if } 0 < \kappa_n(a) \leq 1 \vspace{3pt} \\ 
\displaystyle L \lambda + \frac{2 \sigma}{\kappa_n(a)} \sqrt{\kappa_n(a) \ln \left( \delta^{-1} D \left(1 + \kappa_n(a)\right)^{\frac{m_i
}{2}}\right)},  \text{ else.}
\end{cases}
\end{align*}
\end{lem}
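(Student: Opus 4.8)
The plan is to split the estimation error into a deterministic bias term and a martingale noise term. For any $a\in\mathcal{A}$ with $\kappa_n(a)>0$ — the case $\kappa_n(a)=0$ being vacuous, since then $\beta_n(a,i)=+\infty$ — one has, writing $\hat{h}_t = h(a_t,i)+\omega_{i,t}$ and using $\sum_{t=1}^n K_\lambda(a,a_t)/\kappa_n(a)=1$,
\[ \mu_n(a,i)-h(a,i)=\sum_{t=1}^n\frac{K_\lambda(a,a_t)}{\kappa_n(a)}\bigl(h(a_t,i)-h(a,i)\bigr)+\sum_{t=1}^n\frac{K_\lambda(a,a_t)}{\kappa_n(a)}\,\omega_{i,t}. \]
For the first (bias) sum I would use Assumption~\ref{ass:kernel}: since $K(v)=0$ for $v>1$, the factor $K_\lambda(a,a_t)$ vanishes whenever $\norm{a-a_t}>\lambda$, so every index $t$ that actually contributes satisfies $\norm{a-a_t}\le\lambda$; Lipschitz continuity (Assumption~\ref{ass:smoothness_assumption}) then gives $\norm{h(a_t,i)-h(a,i)}\le L\lambda$, and since the normalized weights are non-negative and sum to one, the triangle inequality bounds the whole bias sum by $L\lambda$. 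This is exactly the $L\lambda$ summand of $\beta_n(a,i)$.

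The substance of the proof is the noise sum $\kappa_n(a)^{-1}\norm{M_n(a)}$ with $M_n(a):=\sum_{t=1}^nK_\lambda(a,a_t)\,\omega_{i,t}$. Because $\kappa_n(a)$ depends on the whole realized history $a_{1:n}$, the normalized weights are not predictable, so a plain Azuma/Hoeffding argument does not apply; instead I would establish a time-uniform, self-normalized bound on $\norm{M_n(a)}$ by the method of mixtures. Since each $a_t$ is $\mathcal{F}_{t-1}$-measurable, the scalars $K_\lambda(a,a_t)$ are predictable, so for fixed $\gamma\in\R^{m_i}$ the process $\mathcal{M}_n^\gamma:=\exp\bigl(\gamma^\top M_n(a)-\tfrac{\sigma^2}{2}\norm{\gamma}^2 W_n(a)\bigr)$, where $W_n(a):=\sum_{t=1}^nK_\lambda(a,a_t)^2$, is a non-negative supermartingale with $\mathcal{M}_0^\gamma=1$ (directly from Assumption~\ref{ass:observation_model} applied with $\gamma_t=K_\lambda(a,a_t)\gamma$). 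Mixing $\gamma$ against a centered Gaussian on $\R^{m_i}$ and evaluating the resulting Gaussian integral — choosing the prior covariance so that the prior precision and the $\sigma^2 W_n(a)$ term combine into $\sigma^2(1+W_n(a))$ — produces the closed-form non-negative supermartingale $\bar{\mathcal{M}}_n=(1+W_n(a))^{-m_i/2}\exp\!\bigl(\norm{M_n(a)}^2/(2\sigma^2(1+W_n(a)))\bigr)$ with $\bar{\mathcal{M}}_0=1$. Ville's maximal inequality then yields, for any $\delta'\in(0,1)$, that with probability at least $1-\delta'$, $\norm{M_n(a)}^2\le 2\sigma^2(1+W_n(a))\ln\!\bigl((\delta')^{-1}(1+W_n(a))^{m_i/2}\bigr)$ for all $n$ simultaneously; a union bound over the at most $D$ elements of $\mathcal{A}$ with $\delta'=\delta/D$ makes this hold for all $a\in\mathcal{A}$ as well.

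It remains to simplify using $0\le K_\lambda\le1$ (from Assumption~\ref{ass:kernel}, since $K\le c_\mathrm{K}$), hence $W_n(a)\le\sum_{t=1}^nK_\lambda(a,a_t)=\kappa_n(a)$. If $0<\kappa_n(a)\le1$ I would bound $1+W_n(a)\le2$, so $\norm{M_n(a)}\le2\sigma\sqrt{\ln(\delta^{-1}2^{m_i/2}D)}$, and dividing by $\kappa_n(a)$ gives the middle branch of $\beta_n$. If $\kappa_n(a)>1$ I would instead use $1+W_n(a)\le2\kappa_n(a)$ and $(1+W_n(a))^{m_i/2}\le(1+\kappa_n(a))^{m_i/2}$, so $\norm{M_n(a)}\le2\sigma\sqrt{\kappa_n(a)\ln(\delta^{-1}D(1+\kappa_n(a))^{m_i/2})}$, and dividing by $\kappa_n(a)$ gives the last branch; adding the $L\lambda$ bias bound in each case completes the proof. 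I expect the main obstacle to be the mixture step and the attached bookkeeping: verifying the supermartingale property under the conditional sub-Gaussian assumption, justifying the interchange of conditional expectation and the Gaussian integral (Tonelli, by non-negativity) so that $\bar{\mathcal{M}}_n$ is still a supermartingale, and tuning the prior covariance and the case split so that the constants and the $m_i$-dependent logarithmic factors emerge exactly as stated.
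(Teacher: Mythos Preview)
Your proposal is correct and follows essentially the same route as the paper: the same bias/noise decomposition, the same Lipschitz/compact-support argument for the $L\lambda$ term, the same method-of-mixtures self-normalized supermartingale (with $p_t=K_\lambda(a,a_t)$ and a standard Gaussian prior), and the same $W_n\le\kappa_n$ simplification with the identical case split. The only cosmetic difference is that you invoke Ville's maximal inequality directly, whereas the paper obtains the time-uniform bound via an explicit stopping-time argument \`a la Abbasi-Yadkori et al.; these are equivalent.
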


These bounds enable us to provide safety guarantees as we can always make sure to do only experiments with parameters $a$ for which all $g_i$ are above their safety threshold with high probability.
However, to guarantee safety, we further require that the uncertainty after finitely many iterations shrinks to a fixed value below a threshold.
\begin{lem}
\label{lem:bound_uncertainty}
For a given $a \in \mathcal{A}$ and for fixed values of $L$, $\lambda$, $\sigma$, $\delta$, 
after doing $N_{\bar{\beta}}$ experiments with $a_n$ such that $\norm{a-a_n} / \lambda \le 1$, the quantity $\beta_n(a,i)$ in Lemma~\ref{lem:bounds} can be upper bounded by
\begin{equation*}
\bar{\beta} := L \lambda+2\sigma \frac{\max(\alpha_1,\alpha_2(N_{\bar{\beta}}))}
{N_{\bar{\beta}} \chi_\mathrm{K}},
\end{equation*}
where
\begin{align*}
\alpha_1 &\coloneqq \sqrt{\ln \left( \frac{2^{\frac{1}{2}}}{\delta}D \right)}, \\
\alpha_2(N_{\bar{\beta}}) &\coloneqq\sqrt{N_{\bar{\beta}} \chi_\mathrm{K} \ln \left( \frac{(1+N_{\bar{\beta}} \chi_\mathrm{K})^{\frac{1}{2}}}{\delta}D \right)}.
\end{align*}
\end{lem}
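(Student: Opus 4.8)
The plan is to treat this as a deterministic post-processing of the bound $\beta_n(a,i)$ from Lemma~\ref{lem:bounds}: once $\kappa_n(a)$ is bounded below, $\beta_n(a,i)$ is controlled by monotonicity, and the final form of $\bar\beta$ comes from not knowing which branch of Lemma~\ref{lem:bounds} is active. Concretely, the proof has three steps.

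First, I would lower-bound $\kappa_n(a)$. Since $\kappa_n(a)=\sum_{t=1}^n K_\lambda(a,a_t)$ is a sum of nonnegative terms, and by Assumption~\ref{ass:kernel} each of the $N_{\bar\beta}$ summands coming from an experiment with $\norm{a-a_n}/\lambda\le 1$ is bounded below by $\chi_\mathrm{K}$ (the kernel is bounded below by $\chi_\mathrm{K}$ on its support), discarding the remaining nonnegative summands gives $\kappa_n(a)\ge N_{\bar\beta}\chi_\mathrm{K}>0$. In particular the first branch of $\beta_n(a,i)$ (the value $+\infty$, which requires $\kappa_n(a)=0$) is never active, so only the two finite branches remain to be bounded.

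Second, I would show that both finite branches are non-increasing in $\kappa_n(a)$, so that $\kappa_n(a)$ may be replaced by its lower bound $N_{\bar\beta}\chi_\mathrm{K}$ without decreasing the bound. For the middle branch this is immediate, as it equals $L\lambda+(2\sigma/\kappa_n(a))\sqrt{\ln(\delta^{-1}2^{m_i/2}D)}$, which is proportional to $1/\kappa_n(a)$. For the last branch, I would rewrite $(2\sigma/\kappa)\sqrt{\kappa\ln(\delta^{-1}D(1+\kappa)^{m_i/2})}$ as $2\sigma\sqrt{(\ln(\delta^{-1}D)+(m_i/2)\ln(1+\kappa))/\kappa}$ with $\kappa:=\kappa_n(a)$, and differentiate the radicand: the numerator of its derivative equals $(m_i/2)(\kappa/(1+\kappa)-\ln(1+\kappa))-\ln(\delta^{-1}D)$, which is $\le 0$ since $m_i>0$, since $\ln(\delta^{-1}D)\ge 0$ (because $\delta<1$ and $D\ge|\mathcal{A}|\ge 1$), and since $\kappa/(1+\kappa)\le\ln(1+\kappa)$ (the difference $\ln(1+\kappa)-\kappa/(1+\kappa)$ vanishes at $\kappa=0$ and has nonnegative derivative $\kappa/(1+\kappa)^2$). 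Hence the radicand, and with it the whole branch, is non-increasing in $\kappa_n(a)$. I expect this monotonicity check to be the only nonroutine step; everything else is bookkeeping.

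Third, I would substitute $\kappa_n(a)=N_{\bar\beta}\chi_\mathrm{K}$ into each finite branch. The middle branch becomes $L\lambda+(2\sigma/(N_{\bar\beta}\chi_\mathrm{K}))\alpha_1$, and the last branch becomes $2\sigma\sqrt{\ln(\delta^{-1}D(1+N_{\bar\beta}\chi_\mathrm{K})^{m_i/2})/(N_{\bar\beta}\chi_\mathrm{K})}$ added to $L\lambda$, which is exactly $L\lambda+(2\sigma/(N_{\bar\beta}\chi_\mathrm{K}))\alpha_2(N_{\bar\beta})$ (the displayed $\alpha_1,\alpha_2$ correspond to writing this out for the scalar case $m_i=1$; for $m_i>1$ the exponent $1/2$ is replaced by $m_i/2$). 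Since a priori we do not know which branch of Lemma~\ref{lem:bounds} applies at iteration $n$---the threshold is $\kappa_n(a)=1$, i.e.\ whether $N_{\bar\beta}\chi_\mathrm{K}$ falls below or above $1$---I would bound $\beta_n(a,i)$ by the larger of the two resulting expressions, which is exactly $\bar\beta=L\lambda+2\sigma\max(\alpha_1,\alpha_2(N_{\bar\beta}))/(N_{\bar\beta}\chi_\mathrm{K})$. This completes the argument.
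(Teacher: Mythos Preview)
Your proposal is correct and follows essentially the same route as the paper: lower-bound $\kappa_n(a)\ge N_{\bar\beta}\chi_\mathrm{K}$ via Assumption~\ref{ass:kernel}, then replace $\kappa_n(a)$ by this lower bound in each finite branch of Lemma~\ref{lem:bounds} and take the maximum. The only difference is that you spell out the monotonicity of the third branch in $\kappa_n(a)$ via differentiation, whereas the paper simply asserts the corresponding inequality; your added justification is sound and fills a gap the paper leaves implicit.
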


Given a desired bound $\bar{\beta}$ ($> L \lambda$) on the uncertainty, we will denote by $N_{\bar{\beta}}$ the smallest positive integer\footnote{This number is well defined because the right-hand side of the inequality tends to $L \lambda$ as $N_{\bar{\beta}} \to \infty$, so the inequality is satisfied for some $N_{\bar{\beta}}$.} such that $\bar{\beta} \ge L \lambda+2\sigma \max(\alpha_1, \alpha_2(N_{\bar{\beta}})) / (N_{\bar{\beta}} \chi_\mathrm{K})$.

\subsection{The algorithm}

We now present the safe learning algorithm.
Starting from the initial safe seed given by Assumption~\ref{ass:safe_seed}, we can do a first experiment and receive a measurement of the reward and constraint functions.
Following~\cite {berkenkamp2021bayesian}, we then leverage the functions' Lipschitz-continuity to generate a set of parameters $a$ that are safe with high probability.
For this, we first construct confidence intervals as
\begin{equation}
    \label{eqn:confidence_interv}
    Q_n(a, i) = \begin{cases}
        [\mu_{n-1}(a, 0)\pm\beta_{n-1}(a,0)]&\text{ if }i=0\\
        [\norm{\mu_{n-1}(a, i)}\pm\beta_{n-1}(a,i)]&\text{ else.}
    \end{cases}
\end{equation}
As the \safeopt algorithm requires that the safe set does not shrink, we further define the contained set as
\begin{equation}
    \label{eqn:contained}
    C_n(a,i)=C_{n-1}(a,i)\cap Q_n(a,i),
\end{equation}
where $C_0(a,i)$ is a subset of $[0,\infty)$ for all $a\in S_0$ and a subset of $\R$ for all $a\in\mathcal{A}\setminus S_0$.
Taking the intersection guarantees that, should the uncertainty increase for later $n$ and, therefore, $Q_n$ increase for some $(a,i)$, the contained set does not increase.
This enables us to define lower and upper bounds as $l_n(a,i)\coloneqq\min C_n(a,i)$ and $u_n(a,i)\coloneqq\max C_n(a,i)$ with which we can update the safe set:
\begin{equation}
    \label{eqn:safe_set_update}
    S_n = \bigcap_{i\in\mathcal{I}_g}\bigcup_{a\in S_{n-1}}\{a'\in\mathcal{A}\mid l_n(a,i)-L\norm{a-a'}\ge c_i\}.
\end{equation}
As we define the safe set in terms of the contained set whose uncertainty cannot increase, the safe set cannot shrink.
By sampling only from this set, we can guarantee that each experiment will be safe with high probability.
However, we also seek to optimize the policy. 
Thus, for a meaningful exploration strategy, we define two additional sets~\cite{berkenkamp2021bayesian}: parameters that are likely to yield a higher reward than our current optimum (potential maximizers, $M_n$) and parameters that are likely to enlarge $S_n$ (potential expanders, $G_n$).
Formally, we define them as
\begin{align}
    \label{eqn:maximizers}
    M_n&\coloneqq \left\{a\in S_n\mid u_n(a,0)\ge \max_{a'\in S_n}l_n(a',0)\right\}\\
    \label{eqn:expanders}
    G_n&\coloneqq \Big\{a\in S_n\mid \exists a' \in\mathcal{A}\setminus S_n, \exists i\in\mathcal{I}_\mathrm{g}\colon \\
    &\qquad\qquad\qquad u_n(a,i)-L\norm{a-a'}\ge c_i \Big\}. \nonumber
\end{align}
Given those sets, we select our next sample location as
%
\begin{equation}
    \label{eqn:next_sample}
    a_n = 
        \argmax\limits_{a'\in M_n\cup G_n}\max_{i\in\mathcal{I}}w_n(a',i), 
\end{equation}
with 
$w_n(a,i) = u_n(a,i) - l_n(a,i)$, which is basically a variant of the upper confidence bound algorithm~\cite{srinivas2012gaussian}.
Further, we can, at any iteration, obtain an estimate of the optimum through
\begin{equation}
    \label{eqn:optimum}
    \hat{a}_n = \argmax_{a\in S_n}l_n(a,0).
\end{equation}
The entire algorithm is summarized in Algorithm~\ref{alg:diffopt}.

\begin{algorithm}
\small 
\caption{Pseudocode of \colsafe.}
\label{alg:diffopt}
\begin{algorithmic}[1]
\STATE \textbf{Input:} Domain $\mathcal{A}$, Safe seed $S_0$, Lipschitz constant $L$, optimality bound $\bar{\beta}$
\WHILE{True} 
\STATE Update safe set with~\eqref{eqn:safe_set_update}
\STATE Update set of potential maximizers with~\eqref{eqn:maximizers}
\STATE Update set of potential expanders with~\eqref{eqn:expanders}
\STATE Select $a_n$ with~\eqref{eqn:next_sample}
\STATE Receive measurements $\hat{f}(a_n)$, $\hat{g}_i(a_n)$, for all $i\in\mathcal{I}_\mathrm{g}$
\STATE Update Nadaraya-Watson estimator~\eqref{eqn:nadaraya-watson} with new data
\IF{$S_{n-1} = S_n$ and $w_n(a,i) \le 2\bar{\beta}$ for all $a\in M_n \cup G_n, i\in\mathcal{I}$}
\RETURN Best guess~\eqref{eqn:optimum}
\ENDIF
\ENDWHILE
\end{algorithmic}
\end{algorithm}

Before starting our main result, given a desired optimality bound $\bar{\beta} > L \lambda$, we need to define the reachable set $R_{\bar{\beta}}(S)$, which contains the parameters $a\in\mathcal{A}$ that we could directly safely explore from $S$ if we knew $g(a,i)$ with $2\bar{\beta}$-precision:
\begin{align}
    \label{eqn:reachability}
    &R_{\bar{\beta}}(S) := \\
    &S \cup \bigcap_{i\in\mathcal{I}_\mathrm{g}}\left\{a\in\mathcal{A}\mid \exists a'\in S\colon g(a',i) - 2\bar{\beta} - L\norm{a-a'}\ge c_i\right\}.\nonumber
\end{align}
We further define $R_{\bar{\beta}}^n(S):=R_{\bar{\beta}}^{n-1}(R_{\bar{\beta}}(S))$, $R^1_{\bar{\beta}}(S):=R_{\bar{\beta}}(S)$, and the closure of $R_{\bar{\beta}}$, $\bar{R}_{\bar{\beta}}(S)\coloneqq \lim_{n\to\infty}R_{\bar{\beta}}^n(S)$, which is the largest set that can be safely explored starting from $S$.

We can then prove the following properties.

\begin{theo}
\label{thm:safety}
Under Assumptions~\ref{ass:safe_seed}--\ref{ass:kernel}, when following Algorithm~\ref{alg:diffopt}, and given a desired optimality bound $\bar{\beta} > L \lambda$, the following holds with probability at least $1-\delta$:
\begin{enumerate}
    \item \emph{Safety:} For all $n\ge 0$ and all $i\in\mathcal{I}_\mathrm{g}$: $\norm{g_i(a_n)}\ge c_i$.
    \item \emph{Optimality:} Algorithm~\ref{alg:diffopt} stops in finite time, and returns $\hat{a}_n$ for which $f(\hat{a}_n)\ge f(a^*)-2\bar{\beta}$,
\end{enumerate}
where $a^*$ is the parameterization that maximizes $f$ in the safely reachable set $\bar{R}_{\bar{\beta}}(S_0)$.
\end{theo}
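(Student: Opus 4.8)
The plan is to condition throughout on the event $E$ of Lemma~\ref{lem:bounds}, which has probability at least $1-\delta$ and on which $\norm{\mu_n(a,i)-h(a,i)}\le\beta_n(a,i)$ for all $n$, $a$, $i$. By the reverse triangle inequality this places $f(a)$ (for $i=0$) and $\norm{g_i(a)}$ (for $i\in\mathcal{I}_\mathrm{g}$) inside each $Q_{n+1}(a,i)$, hence -- since $C_0$ is initialized to contain them -- inside every $C_n(a,i)$, so that $l_n(a,0)\le f(a)\le u_n(a,0)$ and $l_n(a,i)\le\norm{g_i(a)}\le u_n(a,i)$ for all $n$; this is the one ingredient feeding both assertions. \emph{Safety} then follows by induction on $n$ that every $a'\in S_n$ is truly safe: the base case is Assumption~\ref{ass:safe_seed}, and for the step, if $a'\in S_n$ then for each $i\in\mathcal{I}_\mathrm{g}$ there is $a_i'\in S_{n-1}$ (safe by hypothesis) with $l_n(a_i',i)-L\norm{a_i'-a'}\ge c_i$, whence $\norm{g_i(a')}\ge\norm{g_i(a_i')}-L\norm{a_i'-a'}\ge l_n(a_i',i)-L\norm{a_i'-a'}\ge c_i$ by Assumption~\ref{ass:smoothness_assumption}; since $a_n\in M_n\cup G_n\subseteq S_n$, we conclude $\norm{g_i(a_n)}\ge c_i$.

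For \emph{finite-time termination} I would argue by contradiction. Since $C_n\subseteq C_{n-1}$ gives $l_n\ge l_{n-1}$, a straightforward induction using~\eqref{eqn:safe_set_update} yields $S_{n-1}\subseteq S_n$, and finiteness of $\mathcal{A}$ forces $S_n=S_{n_0}$ for all $n\ge n_0$. If the loop never exits, then for every $n>n_0+1$ the selected $a_n$ satisfies $\max_{i\in\mathcal{I}}w_n(a_n,i)>2\bar{\beta}$, so by finiteness some parameter $\tilde a$ is chosen infinitely often. Once $\tilde a$ has been sampled $N_{\bar{\beta}}$ times (and each $a_m=\tilde a$ counts as a ``nearby'' sample, as $\norm{\tilde a-a_m}/\lambda=0\le1$), Lemma~\ref{lem:bound_uncertainty} together with the monotonicity of $\beta_n(\tilde a,i)$ in the number of nearby observations gives $\beta_n(\tilde a,i)\le\bar{\beta}$ thereafter, hence $w_n(\tilde a,i)\le2\beta_{n-1}(\tilde a,i)\le2\bar{\beta}$ for all $i$; but then at the next time $\tilde a$ is selected the stopping test is satisfied, contradicting non-termination.

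For the \emph{optimality gap} at the halting index $n$, I would first establish $R_{\bar{\beta}}(S_n)=S_n$. If there were $a\in R_{\bar{\beta}}(S_n)\setminus S_n$, then for each $i\in\mathcal{I}_\mathrm{g}$ there would be $a_i'\in S_n$ with $\norm{g_i(a_i')}\ge c_i+2\bar{\beta}+L\norm{a-a_i'}$; since $u_n(a_i',i)\ge\norm{g_i(a_i')}$, the point $a_i'$ is a potential expander (witnessed by $a\notin S_n$ and index $i$), so the stopping test forces $w_n(a_i',i)\le2\bar{\beta}$ and hence $l_n(a_i',i)\ge\norm{g_i(a_i')}-2\bar{\beta}\ge c_i+L\norm{a-a_i'}$; because the stopping test also gives $S_{n-1}=S_n$, the update~\eqref{eqn:safe_set_update} would then put $a\in S_n$, a contradiction. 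By monotonicity of $R_{\bar{\beta}}$ in its argument and $S_0\subseteq S_n$, this yields $\bar{R}_{\bar{\beta}}(S_0)\subseteq\bar{R}_{\bar{\beta}}(S_n)=S_n$, so $a^*\in S_n$. If $f(\hat{a}_n)\ge f(a^*)$ the bound is immediate; otherwise $l_n(\hat{a}_n,0)\le f(\hat{a}_n)\le f(a^*)\le u_n(a^*,0)$ shows $a^*\in M_n$, the stopping test gives $w_n(a^*,0)\le2\bar{\beta}$, and finally $f(a^*)\le l_n(a^*,0)+2\bar{\beta}\le l_n(\hat{a}_n,0)+2\bar{\beta}\le f(\hat{a}_n)+2\bar{\beta}$, using that $\hat{a}_n$ maximizes $l_n(\cdot,0)$ over $S_n\ni a^*$.

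The step I expect to be the main obstacle is termination: converting Lemma~\ref{lem:bound_uncertainty} into a clean ``once the width of a parameter falls below $2\bar{\beta}$ it stays below $2\bar{\beta}$'' statement needs care with the monotonicity of $\beta_n$ across the regimes $\kappa_n(a)\le1$ and $\kappa_n(a)>1$ and with the fact that non-identical nearby samples also raise $\kappa_n(a)$. A secondary nuisance is the index bookkeeping between $S_{n-1}$, $S_n$, $\mu_{n-1}$ and $\mu_n$, which is precisely what makes the identity $R_{\bar{\beta}}(S_n)=S_n$ at termination go through and should be written out carefully rather than elided.
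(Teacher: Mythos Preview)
Your proposal is correct and tracks the paper's proof closely: condition on the event of Lemma~\ref{lem:bounds}, establish safety by induction via Lipschitz continuity, obtain $\bar{R}_{\bar{\beta}}(S_0)\subseteq S_n$ at the stopping index by showing $R_{\bar{\beta}}(S_n)\setminus S_n=\varnothing$ (the paper does the same, routed through their Lemmas~\ref{lem:safe_set_increase_i}--\ref{lem:safe_set_increase_ii}), and close the optimality gap by noting $a^*\in M_n$ exactly as in the paper's Lemma~\ref{lem:epsilon_optimal}.

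The one place where you deviate is termination. The paper proves a quantitative bound: once $S_n$ stabilizes at some $n^*$ (finite since $\mathcal{A}$ is finite), Lemma~\ref{lem:max_exp} forces $G_m\cup M_m$ to be decreasing, and then a counting argument (Lemma~\ref{cor:limit_uncertainty}) shows that after at most $N_{\bar{\beta}}|\mathcal{A}|$ further steps every element of $G_n\cup M_n$ has width $\le 2\bar{\beta}$; this is what yields the explicit iteration bound mentioned right after the theorem. Your pigeonhole-by-contradiction argument is also valid for the theorem as stated, just non-quantitative. Regarding the obstacle you flag: monotonicity of $\beta_n$ is not needed. Since $C_n=C_{n-1}\cap Q_n$, the widths $w_n(a,i)$ are nonincreasing in $n$ (this is the paper's Lemma~\ref{lem:set_properties}(iii)), so once Lemma~\ref{lem:bound_uncertainty} gives $w_{m+1}(\tilde a,i)\le 2\beta_m(\tilde a,i)\le 2\bar{\beta}$ at the step after the $N_{\bar{\beta}}$-th nearby sample, the bound persists for all later $n$ regardless of how $\beta_n$ behaves.
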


In the proof, we also give an upper bound for the maximum number of iterations after which \colsafe stops and satisfies the optimality guarantees with high probability.

\begin{remark}
    The parameters $\bar{\beta}$ and $\lambda$ can be chosen by the user.
    Thus, we can reach the safe optimum with arbitrary precision---at the expense of letting the algorithm run for more time steps.
\end{remark}
\section{Evaluation}

In the evaluation, we seek to demonstrate that \colsafe \emph{(i)} is generally capable of learning control policies for dynamical systems, \emph{(ii)} is computationally more efficient than \safeopt, and \emph{(iii)} is applicable to real experiments.
Further, we want to compare the exploration strategies of \colsafe and \safeopt.

\subsection{Setup}
We evaluate \colsafe and compare it to \safeopt in both simulation and real experiments.
For both, we consider the same setup that has also been used in an earlier work on \safeopt~\cite{sukhija2023gosafeopt}.
In particular, both algorithms are supposed to learn a control policy for a seven-degrees-of-freedom Franka robot arm.
The goal is to let the arm of the robot reach a desired set point without colliding with an obstacle.
For both simulations and real experiments, we consider an operational space impedance controller~\cite{siciliano2008springer} with impedance gain $K$.

For \safeopt, we adopt the parameter settings from~\cite{sukhija2023gosafeopt}, \ie we use a Mat\'{e}rn kernel with $\nu=1.5$.
For \colsafe, we use the same kernel with a length scale of 0.05, set the bandwidth parameter $\lambda=0.5$, and the Lipschitz constant $L=1.75$.
To meet Assumption~\ref{ass:kernel}, we only use the output of the kernel for the Nadaraya-Watson estimator if $\norm{a-a'}<1$; else, we set it to 0.

\subsection{Simulations}

In simulation, we obtain the entries of the impedance gain vector $K$ through a linear quadratic regulator (LQR) approach.
In particular, following~\cite{sukhija2023gosafeopt}, we define the $Q$ and $R$ matrices as
\begin{align*}
    Q &= \begin{pmatrix}
        Q_\mathrm{r} & 0\\ 0 & \kappa_\mathrm{d}Q_\mathrm{r}
    \end{pmatrix},\text{ where } Q_\mathrm{r} = 10^{q_\mathrm{c}}I_3\\
    R &= 10^{r-2}I_3.
\end{align*}
Here, we heuristically set $\kappa_\mathrm{d}=0.1$ and are then left with $q_\mathrm{c}\in \R$ and $r\in \R$ as the tuning parameters.
Hence, the parameter space is $d=2$ for this experiment.
For the system model, we assume that a feedback linearization approach~\cite{siciliano2008springer} is used.
Hence, the assumed system matrices are
\[
    A = \begin{pmatrix}
        0 & I_3\\ 0 & 0
    \end{pmatrix}\quad B = \begin{pmatrix}
        0\\ I_3
    \end{pmatrix}.
\]
As we actually use an impedance controller instead of feedback linearization, this derived controller will be suboptimal, and there is a need for further tuning.
However, it can provide us with a safe seed.

We run both \colsafe and \safeopt in the simulation environment and show the evolution of the safe set in \figref{fig:exploration_comparison}.
Already after the first iteration, \colsafe has a relatively large safe set in comparison to \safeopt.
As we continue the exploration, the safe set grows for both algorithms.
However, comparing the safe set at iterations~60 and~200, we see that \colsafe does not expand the safe set further, \ie it already converged.
Thus, we can conclude that \colsafe is significantly faster in exploring the safe set while providing comparable safety guarantees.
However, this result depends on the chosen kernels and hyperparameters and is difficult to generalize.

\begin{figure*}
    \centering
    \subfloat[\colsafe after 1 iteration.]{\includegraphics[width=0.3\textwidth]{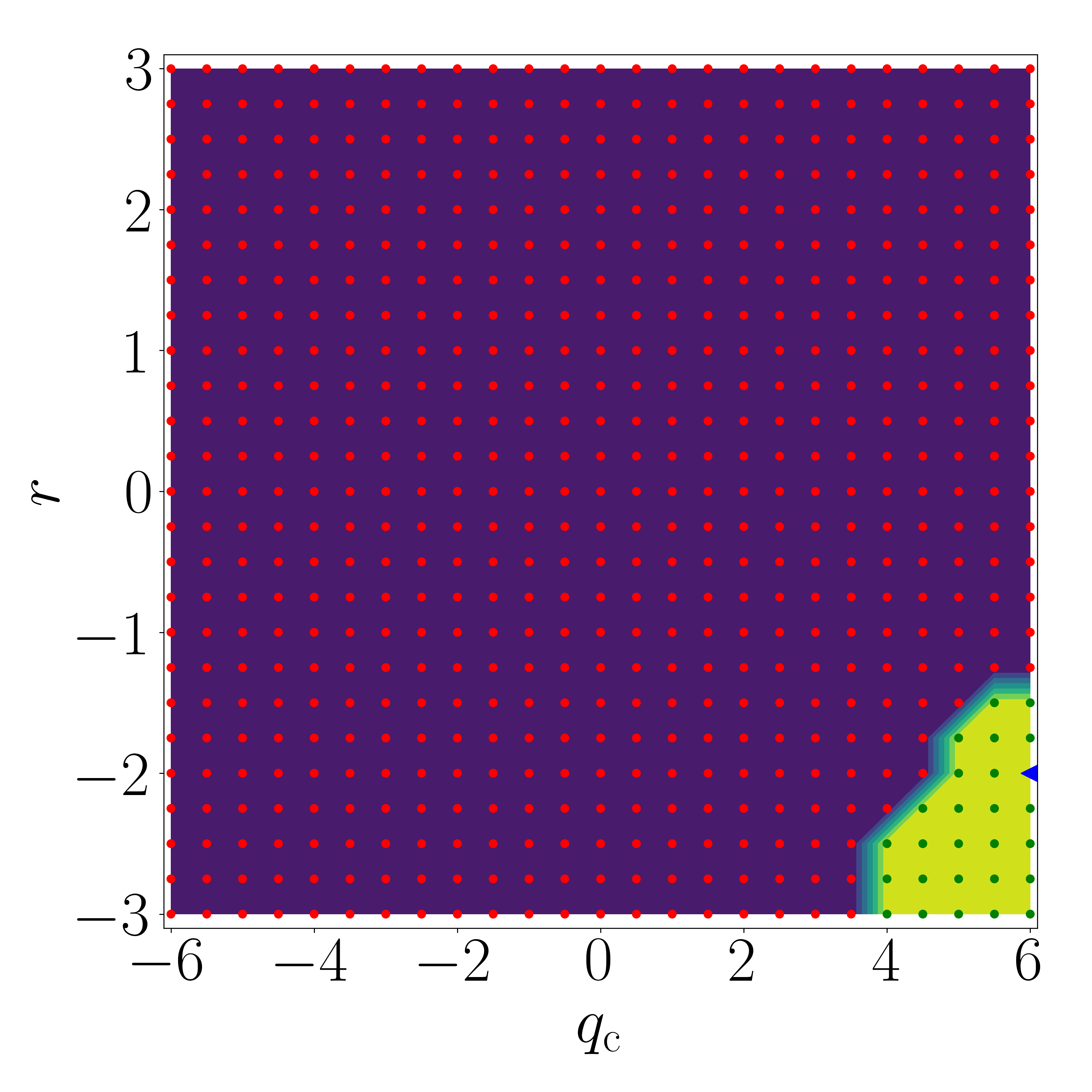}}
    \subfloat[\colsafe after 60 iterations.]{
    \includegraphics[width=0.3\textwidth]{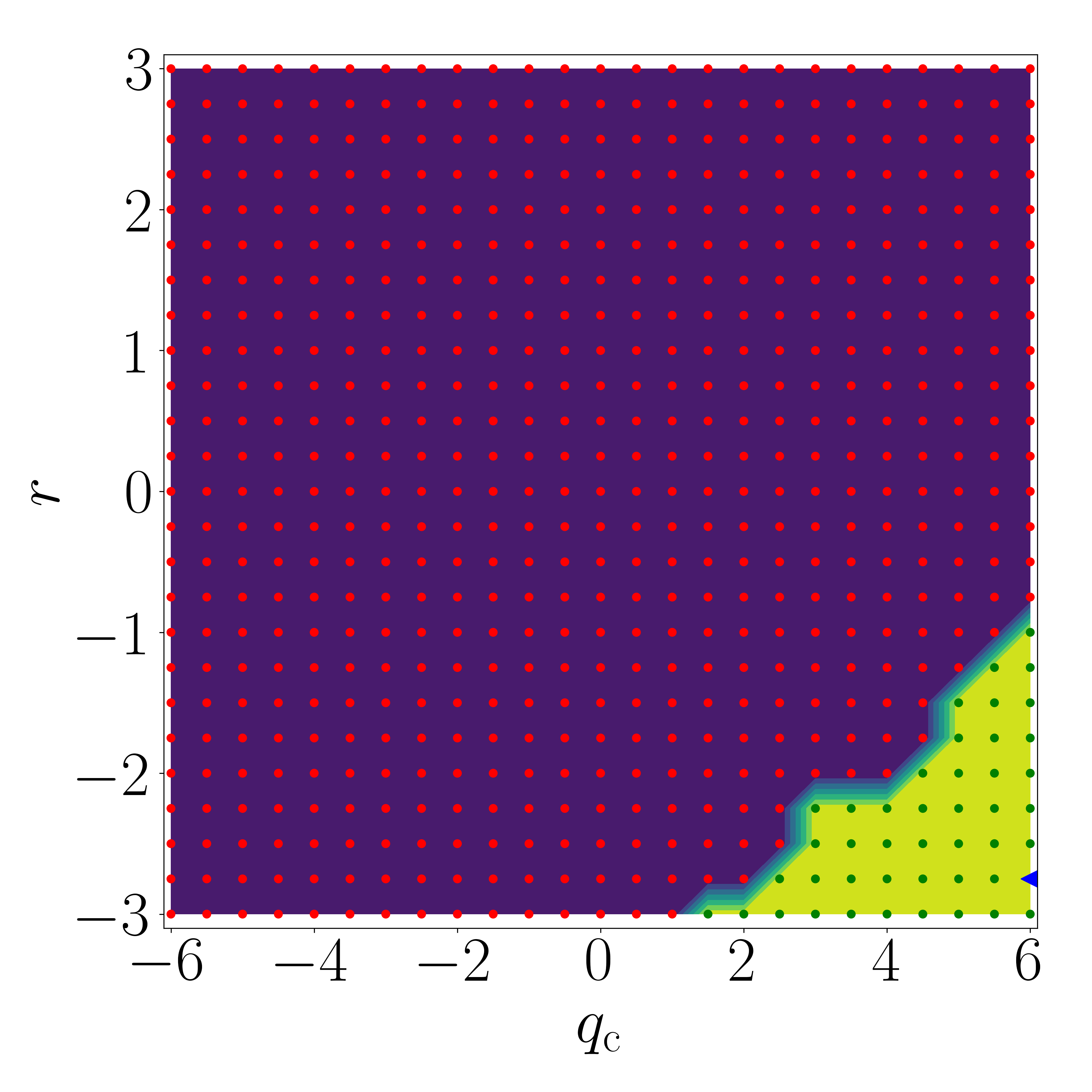}}
    \subfloat[\colsafe after 200 iterations.]{
    \includegraphics[width=0.3\textwidth]{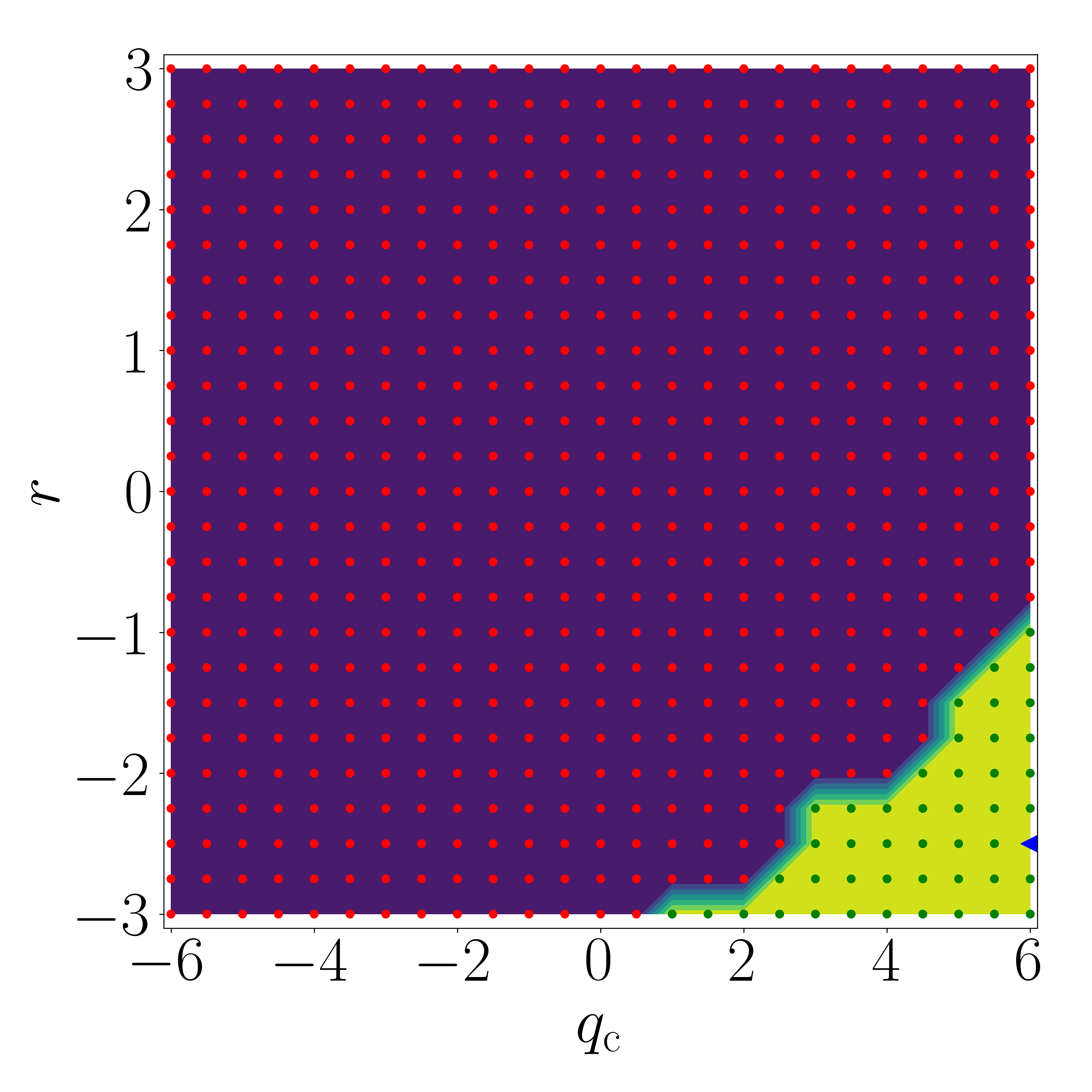}}\\
    \subfloat[\safeopt after 1 iteration.]{
    \includegraphics[width=0.3\textwidth]{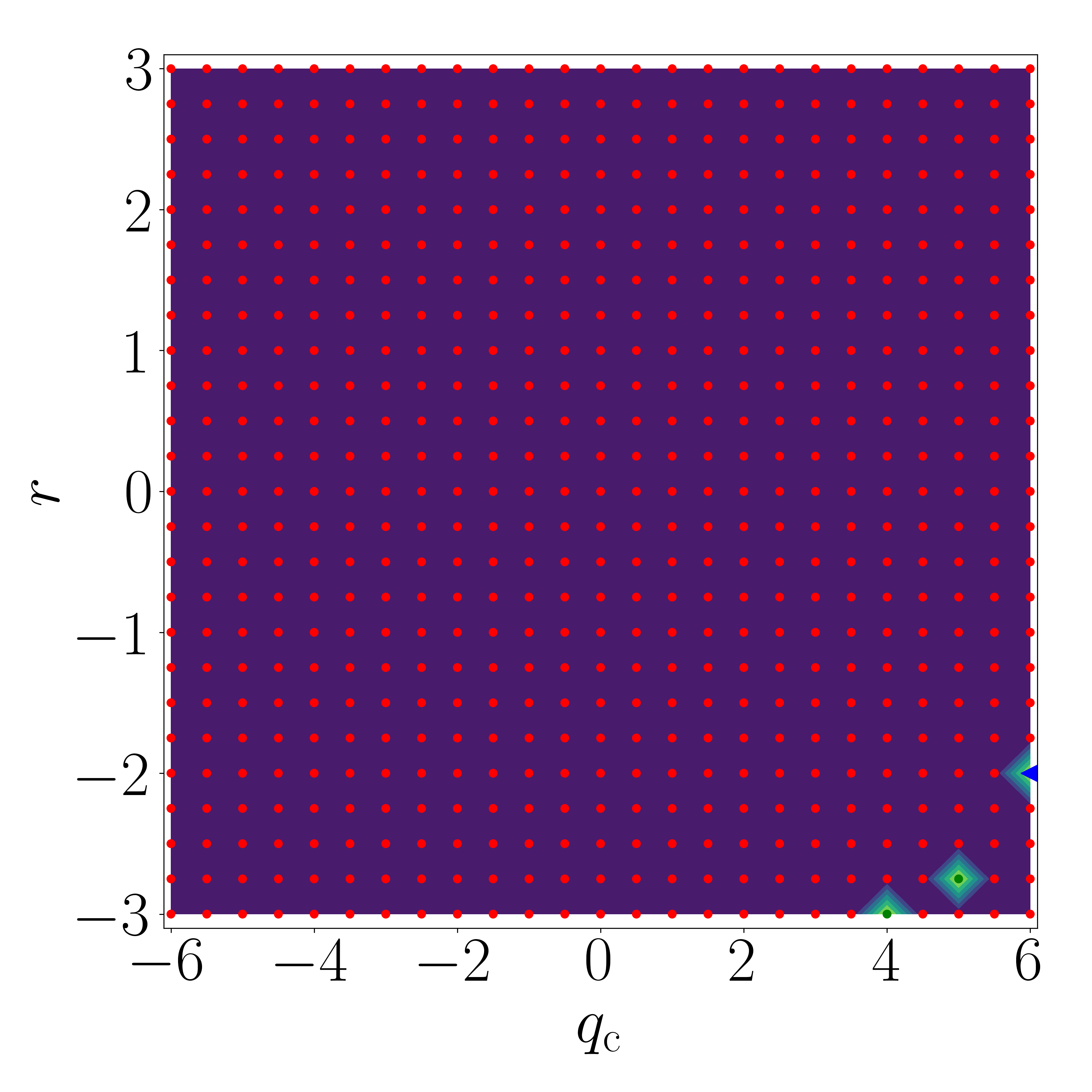}}
    \subfloat[\safeopt after 60 iterations.]{
    \includegraphics[width=0.3\textwidth]{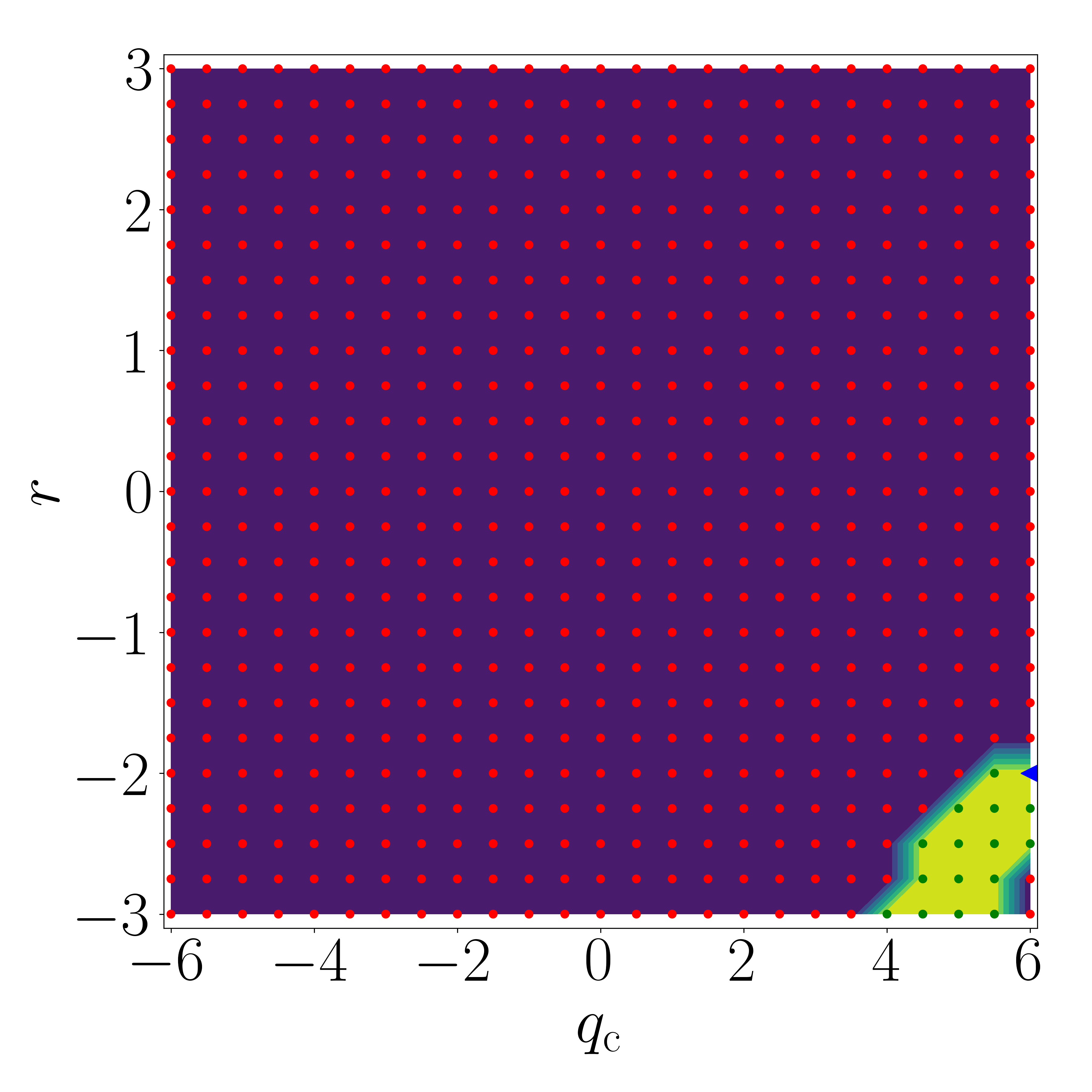}}
    \subfloat[\safeopt after 200 iterations.]{
    \includegraphics[width=0.3\textwidth]{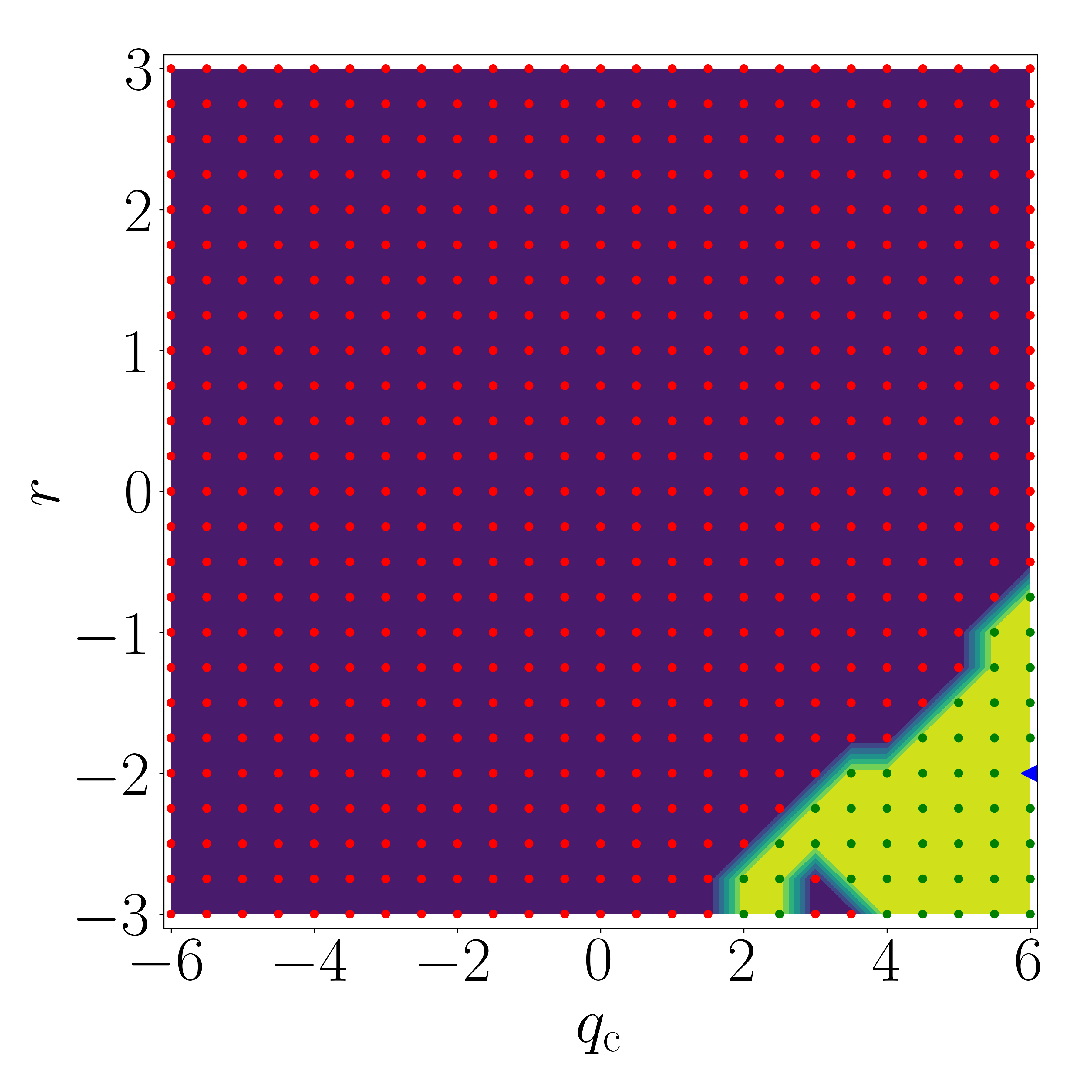}}
    \caption{Exploration behavior comparison of \colsafe and \safeopt. \capt{We can see that \colsafe explores the safe set significantly faster. The yellow regions mark the safe set, and the blue triangle is the current optimum.}}
    \label{fig:exploration_comparison}
\end{figure*}

We let both algorithms run for a total of 410 episodes.
In \figref{fig:eval_safe_set}, we show the safe set of both algorithms after 400 episodes.
While \colsafe had already converged after 60 iterations, \safeopt continued to increase the safe set beyond the safe set that \colsafe can explore.
Thus, we conclude that \colsafe is more conservative.
However, as we discussed, \safeopt requires significantly more samples to achieve the same result.
Further, \safeopt requires more time for the individual optimization steps.
We show this in \figref{fig:time_comparison}.
We measure the time needed to update the safe set and suggest the next policy parameters for both algorithms.
Especially during later iterations, when the data set is larger, \safeopt requires significantly more time for each update step.
In particular, in the last iteration, it takes \safeopt around \SI{1}{\hour} to suggest the next sample point, while \colsafe needs only \SI{12.3}{\second}. 
The times were measured on a standard laptop.

\begin{figure}
    \centering
    \subfloat[\colsafe after 400 episodes.\label{sfig:our_400}]{\includegraphics[width=0.24\textwidth]{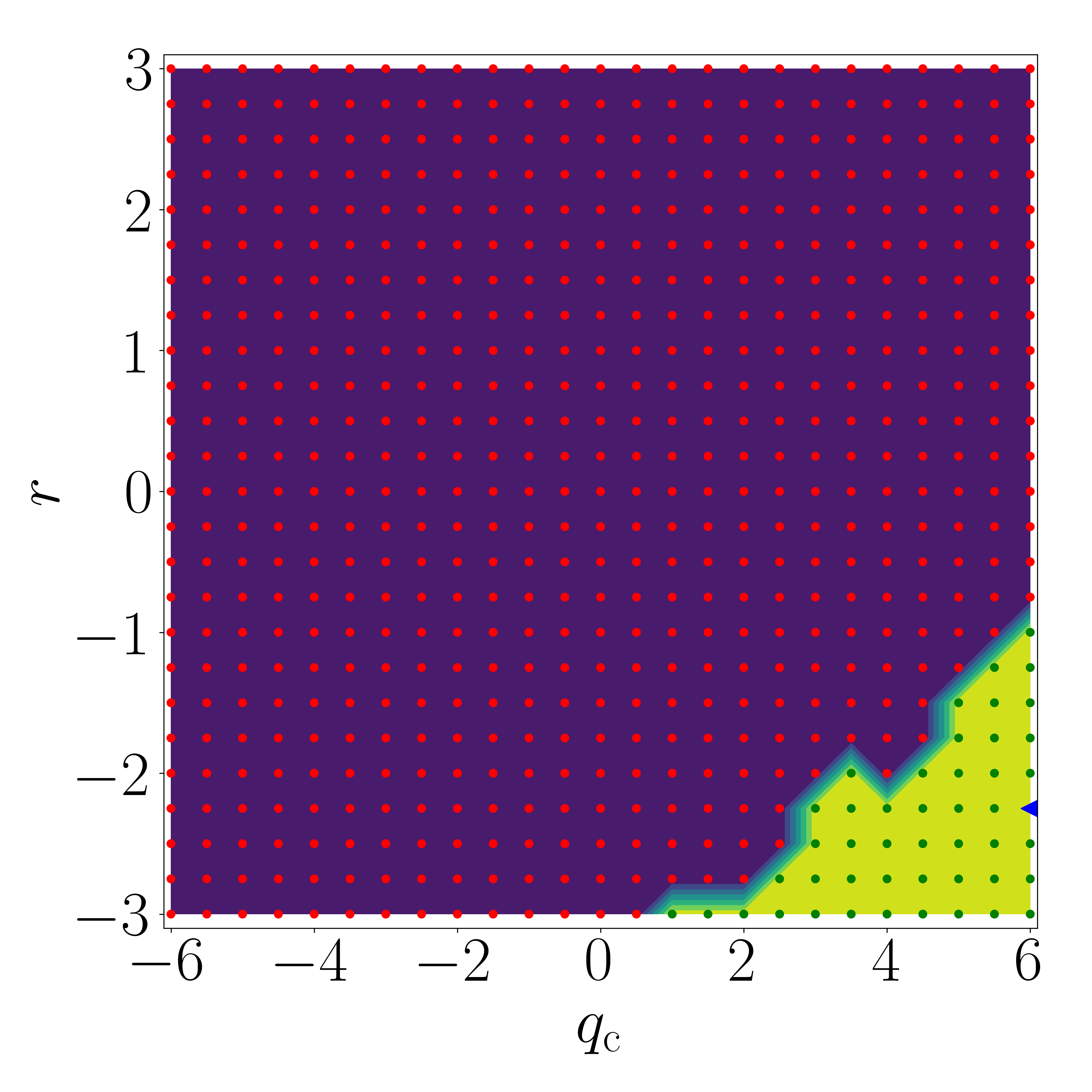}}
    \subfloat[\safeopt after 400 episodes.\label{sfig:safeopt_400}]{\includegraphics[width=0.24\textwidth]{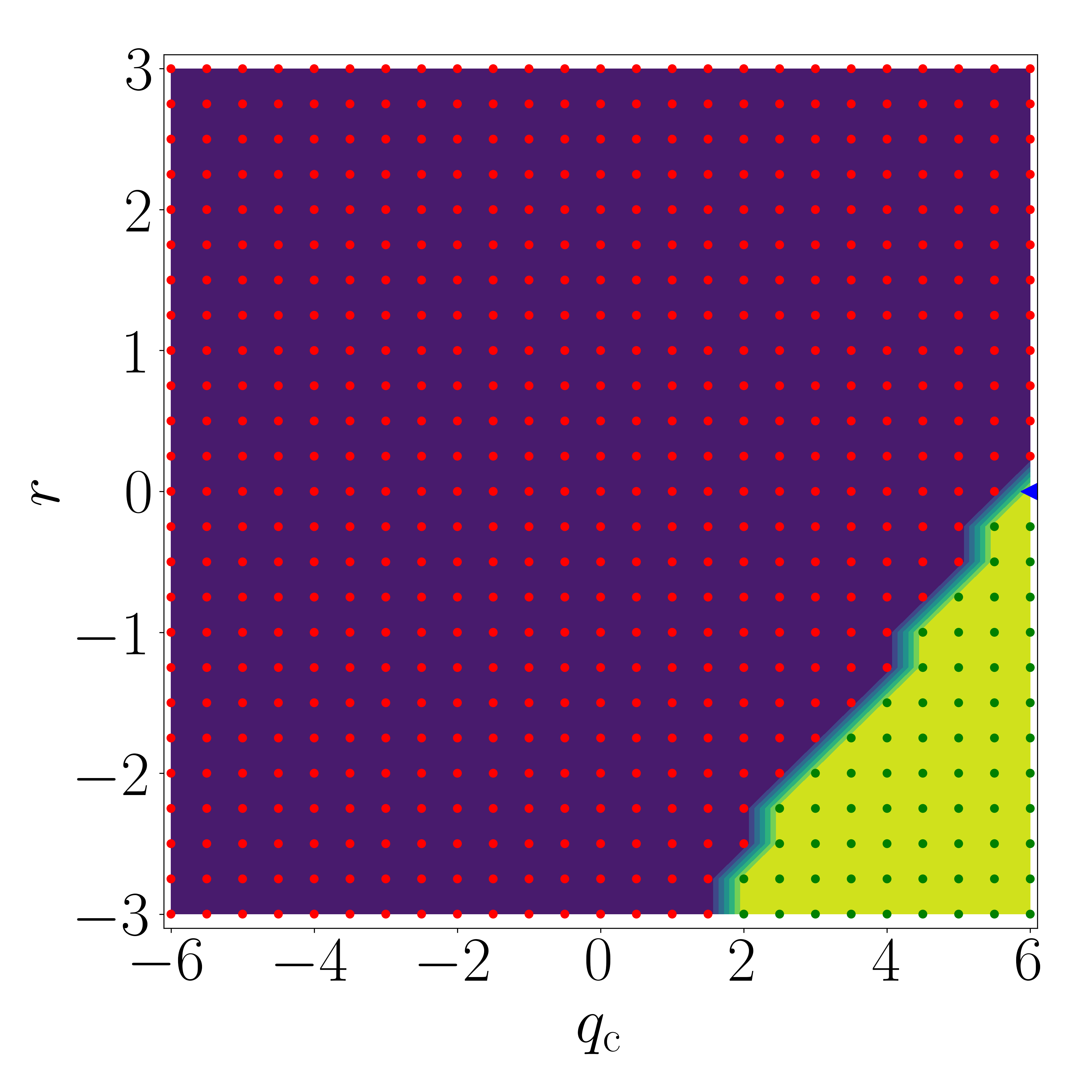}}
    \caption{Performance of \colsafe and \safeopt on the simulated Franka robot. \capt{Both algorithms can similarly explore the safe region, while the computational footprint of \colsafe is significantly lower.}}
    \label{fig:eval_safe_set}
\end{figure}

\begin{figure}
    \centering
    \input{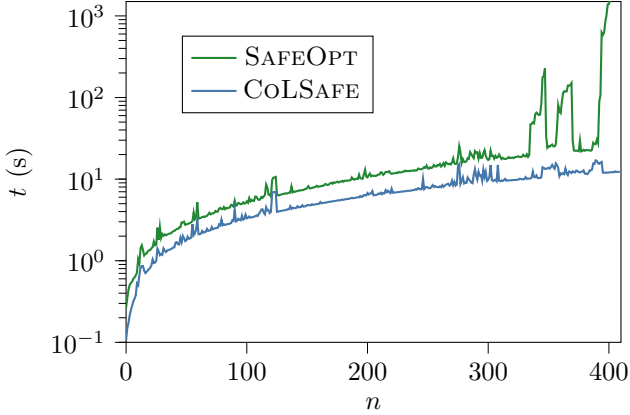}
    \caption{Time complexity of \safeopt and \colsafe. \capt{We measure the time $t$ for updating the safe set and suggesting the next candidate point in each iteration $n$. For higher iterations, \ie more data, \safeopt requires significantly more time. Note the logarithmic scaling of the $y$-axis.}}
    \label{fig:time_comparison}
\end{figure}

These findings align with the general scaling behavior of GPs and the Nadaraya-Watson estimator.
As discussed before, GPs scale cubically with the number of data points. 
Hence, in our case, we have $\mathcal{O}(|\mathcal{A}_\mathrm{expl}|^3)$, where $|\mathcal{A}_\mathrm{expl}|$ is the cardinality of the set of already explored parameters.
Through efficient implementations, this complexity can be reduced to, e.g., quadratic scaling $\mathcal{O}(|\mathcal{A}_\mathrm{expl}|^2)$~\cite{davies2015effective}.
However, we cannot avoid an exponent larger than one unless we leverage approximation techniques~\cite{liu2020gaussian}---however, then we also lose the theoretical guarantees.
The Nadaraya-Watson estimator, on the other hand, already in its naive implementation, exhibits linear scaling properties, i.e., the computational complexity grows with $\mathcal{O}(|\mathcal{A}_\mathrm{expl}|)$~\cite{rao1996pac}.
Through efficient implementations, this can even be reduced to $\mathcal{O}(\log(|\mathcal{A}_\mathrm{expl}|)$~\cite{rao1996pac}.

This evaluation suggests an interesting trade-off.
While the computations for \colsafe are cheaper, the bounds are more conservative, and thus, only a smaller safe set can be explored.
That is, when computing resources are not a concern, and the dimensionality and expected number of data samples are relatively low, but we need to explore the largest possible safe set to increase the chance of finding the global optimum, \safeopt is the better choice.
However, for higher dimensional systems and restricted computing resources, \eg when learning policies on embedded devices, \colsafe is superior.
Further, the reduced computational complexity allows for more exploration.
Thus, \colsafe is less dependent on its initialization.

Lastly, we provide an ablation study to evaluate the influence of the choice of kernel and bandwidth parameter $\lambda$ in \figref{fig:ablation}.
In particular, we test the Mat\'{e}rn kernel and the box kernel for bandwidth parameters $\{1, 1.5, 1.75\}$.
It can be seen that both the choice of the kernel and the bandwidth parameter significantly influence the quality of the solution that can be obtained. 
In particular, a smaller bandwidth parameter results in faster improvements in the objective function.
Moreover, in this example, the very simple box kernel performs better than the more sophisticated and, for GPs, very popular Mat\'{e}rn kernel.

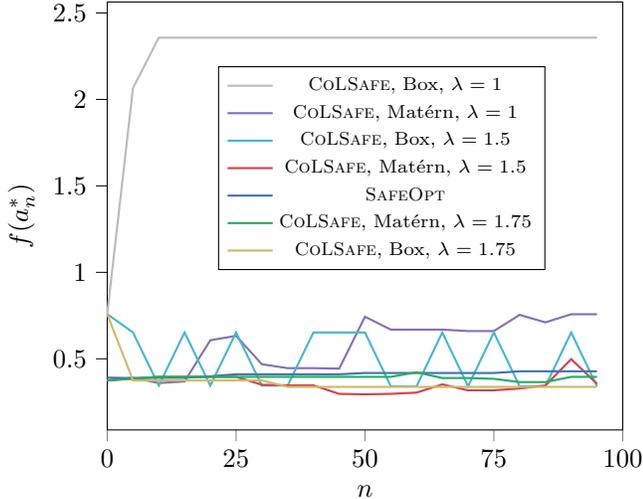
\begin{figure}
    \centering
\begin{tikzpicture}

\begin{axis}[
tick align=outside,
tick pos=left,
x grid style={darkgray176},
xmin=0, xmax=20,
xlabel=$n$,
ylabel=$f(a_n^*)$,
ytick style={color=black},
xticklabel={\pgfmathparse{\tick*5}\pgfmathprintnumber{\pgfmathresult}},
legend style={at={(0.85,0.85)}}
]
\addplot [thick, solin-colour7]
table {%
0 0.758575078840016
1 2.0652284810864
2 2.35714285714286
3 2.35714285714286
4 2.35714285714286
5 2.35714285714286
6 2.35714285714286
7 2.35714285714286
8 2.35714285714286
9 2.35714285714286
10 2.35714285714286
11 2.35714285714286
12 2.35714285714286
13 2.35714285714286
14 2.35714285714286
15 2.35714285714286
16 2.35714285714286
17 2.35714285714286
18 2.35714285714286
19 2.35714285714286
};
\addlegendentry{\scriptsize{\colsafe, Box, $\lambda=1$}}
\addplot [thick, solin-colour4]
table {%
0 0.374663532656303
1 0.390841695684511
2 0.359437123621313
3 0.368856762503266
4 0.607088584990098
5 0.632664240920088
6 0.468398804640578
7 0.445802566238741
8 0.445802566238741
9 0.443152666301776
10 0.743380603709874
11 0.668781465302424
12 0.668781465302424
13 0.668781465302424
14 0.660352363178982
15 0.660352363178982
16 0.754043471911284
17 0.710580064602428
18 0.757612840283957
19 0.757612840283957
};
\addlegendentry{\scriptsize{\colsafe, Mat\'{e}rn, $\lambda=1$}}
\addplot [thick, solin-colour6]
table {%
0 0.758575078840016
1 0.652225548954768
2 0.345039816704852
3 0.652225548954768
4 0.345039816704852
5 0.652225548954768
6 0.345039816704852
7 0.345039816704852
8 0.652225548954768
9 0.652225548954768
10 0.652225548954768
11 0.340857249850956
12 0.340857249850956
13 0.652225548954768
14 0.340857249850956
15 0.652225548954768
16 0.340857249850956
17 0.340857249850956
18 0.652225548954768
19 0.340857249850956
};
\addlegendentry{\scriptsize{\colsafe, Box, $\lambda=1.5$}}
\addplot [thick, solin-colour3]
table {%
0 0.374663532656303
1 0.385084171386038
2 0.395523390017352
3 0.395523390017352
4 0.395523390017352
5 0.395523390017352
6 0.349832601326583
7 0.346267497423473
8 0.346267497423473
9 0.297068056456411
10 0.2947371359549
11 0.297341620946722
12 0.304377085970945
13 0.351981383843497
14 0.317610490956351
15 0.317610490956351
16 0.328195109486926
17 0.345896178752956
18 0.498099451238615
19 0.355838636493774
};
\addlegendentry{\scriptsize{\colsafe, Mat\'{e}rn, $\lambda=1.5$}}
\addplot [thick, solin-colour1]
table {%
0 0.390841695684511
1 0.38815572228805
2 0.390249838565598
3 0.390249838565598
4 0.398838201765247
5 0.41041793484385
6 0.41041793484385
7 0.41041793484385
8 0.41041793484385
9 0.41041793484385
10 0.417821342027618
11 0.417821342027618
12 0.417821342027618
13 0.417821342027618
14 0.417821342027618
15 0.417821342027618
16 0.427675217926598
17 0.427675217926598
18 0.427675217926598
19 0.427675217926598
};
\addlegendentry{\scriptsize{\safeopt}};
\addplot [thick, solin-colour2]
table {%
0 0.374663532656303
1 0.385694099121704
2 0.395523390017352
3 0.395523390017352
4 0.395523390017352
5 0.395523390017352
6 0.395523390017352
7 0.395523390017352
8 0.395523390017352
9 0.395523390017352
10 0.395523390017352
11 0.395523390017352
12 0.422205224603807
13 0.388726073494861
14 0.388726073494861
15 0.384118426099576
16 0.365142636731432
17 0.365142636731432
18 0.395523390017352
19 0.395523390017352
};
\addlegendentry{\scriptsize{\colsafe, Mat\'{e}rn, $\lambda=1.75$}}
\addplot [thick, solin-colour5]
table {%
0 0.758575078840016
1 0.374989181222972
2 0.374989181222972
3 0.374989181222972
4 0.374989181222972
5 0.374989181222972
6 0.374989181222972
7 0.337504119394951
8 0.337504119394951
9 0.337504119394951
10 0.337504119394951
11 0.337504119394951
12 0.337504119394951
13 0.337504119394951
14 0.337504119394951
15 0.337504119394951
16 0.337504119394951
17 0.337504119394951
18 0.337504119394951
19 0.337504119394951
};
\addlegendentry{\scriptsize{\colsafe, Box, $\lambda=1.75$}}
\end{axis}

\end{tikzpicture}
    \caption{Ablation study. \capt{Believed optima over number of iterations for different kernels and bandwidth parameters $\lambda$.}}
    \label{fig:ablation}
\end{figure}

\subsection{Hardware experiments}

For the hardware experiments, we consider an impedance gain
\[
    K = \text{diag}(K_x, K_y, K_z, 2\sqrt{K_x}, 2\sqrt{K_y}, 2\sqrt{K_z}),
\]
where $K_x=\alpha_xK_{rx}$, $K_{rx}$ is a reference value taken from Franka's impedance controller, and $\alpha_x\in[0, 1.2]$ is our tuning parameter.
Parameters $K_y$ and $K_z$ are modeled in the same way.
That is, for $\alpha_x=\alpha_y=\alpha_z=1$, we recover the controller provided by the manufacturer, which we take as our initial safe estimate.
We then aim to improve upon this initial controller.
The setup results in a three-dimensional parameter space.

We leverage the reward and constraint functions from~\cite{sukhija2023gosafeopt} that encourage accurate tracking of the desired path while avoiding hitting an obstacle.
Given these reward and constraint functions, we run both \colsafe and \safeopt.
Always after five experiments, we do an additional experiment with the controller parameters that the algorithms currently believe to be optimal.
We report the rewards of these additional experiments in \figref{fig:real_experiments_reward}.
First of all, we notice that \colsafe can successfully optimize the reward function also in hardware experiments.
Further, similar to the simulations, also in experiments, \colsafe does not incur any failures.
We further see that \colsafe can improve the reward faster than \safeopt, which is in line with our earlier observations that it is faster in growing the initial safe set, where we then also have a higher chance of finding a new optimum.

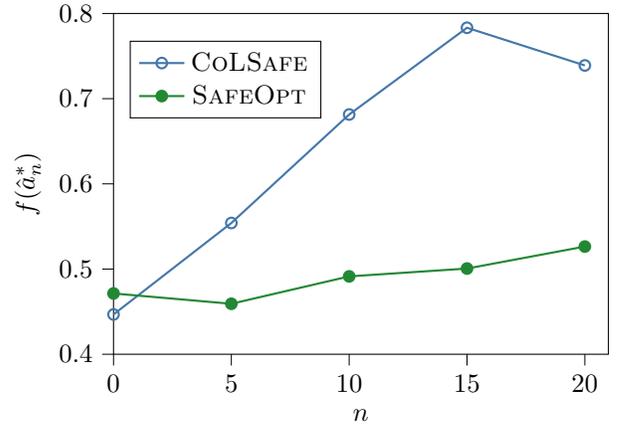
\begin{figure}
    \centering
\begin{tikzpicture}

\definecolor{darkgray176}{RGB}{176,176,176}
\definecolor{darkorange25512714}{RGB}{255,127,14}
\definecolor{lightgray}{RGB}{211,211,211}
\definecolor{steelblue31119180}{RGB}{31,119,180}

\begin{axis}[
tick align=outside,
tick pos=left,
x grid style={darkgray176},
xmin=0, xmax=21,
xtick style={color=black},
y grid style={darkgray176},
height=0.25\textheight,
width = 0.45\textwidth,
ymin=0.4, ymax=0.8,
ytick style={color=black},
ylabel=$f(\hat{a}^*_n)$,
xlabel=$n$,
legend style={at={(0.42,0.92)}}
]
\addplot [thick, mark=o, tol-colour1]
table {%
0 0.446679061851047
5 0.55406706163356
10 0.681405925786021
15 0.783241097551212
20 0.73899187331973
};
\addlegendentry{\colsafe};

\addplot [thick, mark=*, tol-colour3]
table {%
0 0.471352171638027
5 0.459231098601949
10 0.49138543324398
15 0.500592233988356
20 0.526401371950271
};
\addlegendentry{\safeopt}
\end{axis}

\end{tikzpicture}
    \caption{\colsafe vs.\ \safeopt in real experiments. \capt{We show the reward achieved by the currently believed optimal parameters after varying number of iterations. We can see that \colsafe can increase the reward faster than \safeopt.}}
    \label{fig:real_experiments_reward}
\end{figure}

\section{Conclusion}

We propose a novel algorithm for safe exploration in reinforcement learning.
Under assumptions that are comparable to other safe learning methods, such as the popular \safeopt algorithm, we can provide similar high-probability safety and optimality guarantees.
However, our algorithm has significantly lower computational complexity, simplifying its use on embedded devices and for high-dimensional systems.
Moreover, the algorithm can explore the safe set faster than \safeopt.
On the other hand, the final safe set that \safeopt explores is larger than that of \colsafe.
Thus, we trade the lower computational complexity and the faster initial search for a more conservative final result.
In practice, it depends on the application and the available computational resources, which algorithm is preferable.

Herein, we basically adopted the \safeopt algorithm and replaced the Gaussian process estimates with the Nadaraya-Watson estimator. 
For future work, it might be interesting to combine those estimates also with more recent advances of \safeopt such as~\cite{konig2021applied,duivenvoorden2017constrained,sui2018stagewise,sukhija2023gosafeopt}.

Further, we here considered a constant bandwidth parameter $\lambda$.
This parameter could also be tuned to make the algorithm less conservative, and that way, potentially come closer to the performance of \safeopt.

Lastly, we have assumed knowledge of the Lipschitz constant or an upper bound on the Lipschitz constant.
In future work, we seek to estimate the Lipschitz constant from data and integrate this estimate into \colsafe, similar as~\cite{tokmak2024pacsbo} has done for \safeopt.
\section{Proofs}
\label{sec:proof}
Here, we provide the proofs for Lemmas~\ref{lem:bounds} and~\ref{lem:bound_uncertainty} as well as for Theorem~\ref{thm:safety}.

\subsection{Proof of Lemma~\ref{lem:bounds}}
\label{sec:proof_lem_1}

Similar to~\cite{abbasi2011online}, we derive the bounds for the Nadaraya Watson estimator using a super-martingale, which we derive in the next lemma.
\begin{lem}
    \label{lem:super-martingale}
    Under Assumption~\ref{ass:observation_model}, let $(p_n)_{n\in\mathbb{N}}$ be a scalar predictable process with respect to $(\mathcal{F}_n)$, and define, for every $\xi\in\R^m$,
    \[
        \nu_n(\xi)\coloneqq \exp\left[\sum_{r=1}^n\left(\frac{\xi^\transp\omega_np_n}{\sigma} - \frac{1}{2}\xi^\transp\xi p_n^2\right)\right].
    \]
    Then, for every $\xi\in\R^m$, $(\nu(\xi))_{n\in\mathbb{N}}$ is a non-negative super-martingale with respect to $(\mathcal{F}_n)$.
    In particular, $\E[\nu_n(\xi)]\!\le\! 1$.
\end{lem}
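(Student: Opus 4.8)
The plan is to exploit the multiplicative structure of $\nu_n(\xi)$ and reduce the claim to a one-step conditional bound that follows directly from the conditional sub-Gaussianity in Assumption~\ref{ass:observation_model}. Write $\nu_n(\xi) = \prod_{r=1}^n M_r(\xi)$, where $M_r(\xi) := \exp\!\left(\frac{\xi^\transp\omega_r p_r}{\sigma} - \frac{1}{2}\xi^\transp\xi\, p_r^2\right)$ and $\nu_0(\xi) := 1$ (empty product). Since $(p_n)$ is predictable, $p_n$ is $\mathcal{F}_{n-1}$-measurable, and hence so is $\gamma_n := p_n\xi/\sigma \in \R^{m}$; thus $(\gamma_n)$ is an $\R^m$-valued predictable sequence in the sense required by Assumption~\ref{ass:observation_model}. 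Note that $\sigma>0$ is used here so that $\gamma_n$ is well defined.

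First I would establish the one-step estimate $\E[M_n(\xi)\mid\mathcal{F}_{n-1}]\le 1$. The factor $\exp(-\frac{1}{2}\xi^\transp\xi\, p_n^2)$ is $\mathcal{F}_{n-1}$-measurable, so it pulls out of the conditional expectation, leaving $\exp(-\frac{1}{2}\xi^\transp\xi\, p_n^2)\,\E[\exp(\gamma_n^\transp\omega_n)\mid\mathcal{F}_{n-1}]$. Applying the sub-Gaussian inequality of Assumption~\ref{ass:observation_model} with the predictable sequence $\gamma_n$ yields $\E[\exp(\gamma_n^\transp\omega_n)\mid\mathcal{F}_{n-1}] \le \exp(\frac{1}{2}\gamma_n^\transp\gamma_n\,\sigma^2) = \exp(\frac{1}{2}\xi^\transp\xi\, p_n^2)$, and the two exponentials cancel, giving the bound. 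This step also shows $M_n(\xi)$ is conditionally integrable, so the manipulation is legitimate.

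Next I would assemble the supermartingale property by induction. Non-negativity of $\nu_n(\xi)$ is immediate since it is an exponential. Because $\nu_{n-1}(\xi)$ is non-negative and $\mathcal{F}_{n-1}$-measurable, $\E[\nu_n(\xi)\mid\mathcal{F}_{n-1}] = \nu_{n-1}(\xi)\,\E[M_n(\xi)\mid\mathcal{F}_{n-1}] \le \nu_{n-1}(\xi)$; integrability of each $\nu_n(\xi)$ follows from the one-step bound together with the tower property, starting from $\E[\nu_0(\xi)]=1$. Taking expectations and iterating (equivalently, applying the tower rule repeatedly) gives $\E[\nu_n(\xi)] \le \E[\nu_{n-1}(\xi)] \le \cdots \le \E[\nu_0(\xi)] = 1$.

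The only real care required is the measurability and integrability bookkeeping: checking that $\gamma_n = p_n\xi/\sigma$ qualifies as a predictable $\R^m$-valued sequence so that Assumption~\ref{ass:observation_model} applies verbatim, and that each partial product $\nu_n(\xi)$ is integrable so that all conditional expectations are well defined. Both are routine once the one-step conditional bound is in hand. The remainder is the standard de la Pe\~na / Abbasi-Yadkori-type exponential-supermartingale computation and presents no further obstacle.
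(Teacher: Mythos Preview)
Your proposal is correct and follows essentially the same approach as the paper: factor $\nu_n(\xi)$ into one-step terms, use predictability of $p_n$ to pull out the deterministic exponential and apply the conditional sub-Gaussian bound with $\gamma_n = p_n\xi/\sigma$ to obtain $\E[M_n(\xi)\mid\mathcal{F}_{n-1}]\le 1$, and then assemble the supermartingale inequality and the expectation bound via the tower property. Your added remarks on integrability are a welcome bit of extra rigor but do not change the argument.
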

\begin{proof}
    Fix some $\xi\in\R^m$, and let
    \[
        E_n\coloneqq \exp\left(\frac{\xi^\transp \omega_n p_n}{\sigma} - \frac{1}{2}\xi^\transp\xi p_n^2\right),\quad n\in\mathbb{N}.
    \]
    Clearly, $\nu(\xi)=E_1E_2\cdots E_n\ge0$, where $E_r$ is $\mathcal{F}_r$-measurable.
    Note that, for all $n\in\mathbb{N}$,
    \begin{align*}
        \E[E_n\mid \mathcal{F}_{n-1}] &= \E\left[\exp\left(\frac{\xi^\transp\omega_n p_n}{\sigma}\right)\exp\left(-\frac{1}{2}\xi^\transp\xi p_n^2\right)\middle|\mathcal{F}_{n-1}\right]\\
        &= \E\left[\exp\left(\frac{\xi^\transp\omega_np_n}{\sigma}\right)\middle|\mathcal{F}_{n-1}\right]\exp\left(-\frac{1}{2}\xi^\transp\xi p_n^2\right),
    \end{align*}
    because $p_n$ is $\mathcal{F}_{n-1}$-measurable.
    Let $\gamma_n = \xi p_n / \sigma$ for all $n\in\mathbb{N}$; note that $(\gamma_n)$ is predictable with respect to $(\mathcal{F}_n)$.
    Then, because we assume sub-Gaussian noise, we have
    \[
        \E\left[\exp\left(\frac{\xi^\transp\omega_n p_n}{\sigma}\right)\middle|\mathcal{F}_{n-1}\right]\le \exp\left(\frac{\gamma_n^\transp\gamma_n\sigma^2}{2}\right),
    \]
    and, therefore,
    \[
        \E[E_n\mid\mathcal{F}_{n-1}] \le \exp\left(\frac{\gamma_n^\transp\gamma_n\sigma^2}{2}\right)\exp\left(-\frac{\gamma_n^\transp\gamma_n\sigma^2}{2}\right) = 1.
    \]
    Next, for every $n>1$,
    \begin{align*}
        \E[\nu(\xi)\mid\mathcal{F}_{n-1}] &= \E[E_1\cdots E_{n-1}E_n\mid\mathcal{F}_{n-1}]\\
        &= E_1\cdots E_{n-1}\E[E_n\mid\mathcal{F}_{n-1}]\\
        &\le \nu_{n-1}(\xi).
    \end{align*}
    Thus, using the tower property of conditional expectations (see \cite[Theorem~9.7(i)]{Williams-91}),
    \begin{align*}
        \E[\nu_n(\xi)] &= \E[\E[\nu_n(\xi)\mid\mathcal{F}_{n-1}]]\\
        &\le\E[\nu_{n-1}(\xi)]\le\cdots\le \E[\nu_1(\xi)]\\ 
        &=\E[\E[E_1\mid\mathcal{F}_0]]\\
        &\le 1,
    \end{align*}
    which completes the proof.
\end{proof}

For bounds that hold uniformly with respect to time, we will make a stopping time argument, similar to~\cite{abbasi2011online}.
This requires us to extend the previous result to the case of stopping times.
\begin{cor}
    \label{cor:stopping_times}
    Under the conditions of Lemma~\ref{lem:super-martingale}, let $\tau\in\mathbb{N}\cup\{+\infty\}$ be a stopping time with respect to $(\mathcal{F}_n)$.
    Then, $\nu_\tau(\xi)$ exists almost surely (in $\mathcal{F}_\infty$), where $\nu_\tau(\xi)$ is defined in the general case (when possibly $\mathbb{P}[\tau=\infty]>0$) as the almost sure limit $\lim_{n\to\infty}\nu_{\tau\land n}(\xi)$.
    Furthermore, $\E[\nu_\tau(\xi)]\le 1$ for every $\xi\in\R^m$.
\end{cor}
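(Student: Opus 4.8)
The plan is to combine three standard facts from martingale theory: that a stopped non-negative super-martingale is again a non-negative super-martingale, the super-martingale convergence theorem, and Fatou's lemma. Fix $\xi \in \R^m$. Since $\tau$ is a stopping time with respect to $(\mathcal{F}_n)$, so is $\tau \land n$ for each $n$, and it is bounded. By Lemma~\ref{lem:super-martingale}, $(\nu_n(\xi))_n$ is a non-negative super-martingale; hence the stopped process $(\nu_{\tau \land n}(\xi))_{n \ge 0}$ is itself a non-negative super-martingale adapted to $(\mathcal{F}_n)$ (see, e.g., \cite[Chapter~10]{Williams-91}). With the natural convention $\nu_0(\xi) = \exp[0] = 1$, this yields the uniform bound $\E[\nu_{\tau \land n}(\xi)] \le \E[\nu_0(\xi)] = 1$ for every $n$.

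Next I would invoke the super-martingale convergence theorem: a non-negative super-martingale converges almost surely to an integrable, $\mathcal{F}_\infty$-measurable limit. Thus $\nu_\infty(\xi) := \lim_{n \to \infty} \nu_{\tau \land n}(\xi)$ exists almost surely. On the event $\{\tau < \infty\}$ we have $\nu_{\tau \land n}(\xi) = \nu_\tau(\xi)$ for all $n \ge \tau$, so the limit agrees with $\nu_\tau(\xi)$ there; on $\{\tau = \infty\}$ this limit is precisely how $\nu_\tau(\xi)$ was defined in the statement. This establishes the almost-sure existence of $\nu_\tau(\xi)$.

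Finally, the expectation bound follows from Fatou's lemma and the uniform bound from the first step:
\[
    \E[\nu_\tau(\xi)] = \E\Big[\liminf_{n \to \infty} \nu_{\tau \land n}(\xi)\Big] \le \liminf_{n \to \infty} \E[\nu_{\tau \land n}(\xi)] \le 1.
\]
The argument is essentially routine, and I expect the only delicate point to be the case $\mathbb{P}[\tau = \infty] > 0$: there $\nu_\tau(\xi)$ cannot be obtained by direct substitution and must be read as the almost-sure limit, and one should note that non-negativity of $\nu_n(\xi)$ — which holds by construction, as it is an exponential — is exactly what makes both the convergence theorem and the application of Fatou's lemma legitimate without any further integrability assumption.
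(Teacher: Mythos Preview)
Your proposal is correct and follows essentially the same route as the paper's proof: stopped non-negative super-martingale, convergence theorem for non-negative super-martingales, then Fatou's lemma to pass the bound $\E[\nu_{\tau\land n}(\xi)]\le 1$ to the limit. The only cosmetic difference is that you anchor the uniform bound at $\nu_0(\xi)=1$ whereas the paper uses $\E[\nu_{\tau\land 1}(\xi)]\le 1$ from Lemma~\ref{lem:super-martingale}.
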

\begin{proof}
    Fix a $\xi\in\R^m$, and let $\tilde{\nu}_n(\xi)\coloneqq \nu_{\tau\land n}(\xi)$ for all $n\in\mathbb{N}$.
    Then, since, by Lemma~\ref{lem:super-martingale}, $(\nu_n(\xi))$ is a non-negative super-martingale, $(\tilde{\nu}_n(\xi))_{n\in\mathbb{N}}$ is also a non-negative super-martingale (see \cite[Section~10.9]{Williams-91}).
    This implies that $\E[\nu_{\tau\land n}(\xi)]\le\E[\nu_{\tau\land 1}(\xi)]\le 1$.
    Furthermore, by the convergence theorem for non-negative super-martingales (see \cite[Corollary~11.7]{Williams-91}), $\nu_\tau(\xi) = \lim_{n\to\infty}\nu_{\tau\land n}(\xi)$ exists almost surely.
    Moreover, by Fatou's lemma (see \cite[Lemma~5.4]{Williams-91}), $\E[\nu_\tau(\xi)] = \E[\lim\inf_{n\to\infty}\nu_{\tau\land n}(\xi)]\le\lim\inf_{n\to\infty}\E[\nu_{\tau\land n}(\xi)]\le 1$.
\end{proof}

\begin{lem}
\label{lem:tech_a}
    Under Assumption~\ref{ass:observation_model}, let $(p_n)_{n\in\mathbb{N}}$ be a real scalar predictable process with respect to $(\mathcal{F}_n)$.
    Further, let $\tau\in\mathbb{N}\cup\{\infty\}$ be a stopping time with respect to $(\mathcal{F}_n)$.
    Define
    \begin{align*}
        s_n &\coloneqq \sum_{r=1}^np_r\omega_r\\
        v_n &\coloneqq \sum_{r=1}^np_r^2,
    \end{align*}
    where $n\in\mathbb{N}$.
    Then, for every $0<\delta<1$, with probability at least $1-\delta$,
    \[
        s_\tau^\transp s_\tau \le 2\sigma^2\ln\left[\frac{(v_\tau+1)^{\frac{m}{2}}}{\delta}\right](v_\tau+1).
    \]
\end{lem}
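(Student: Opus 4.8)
\subsection*{Proof proposal for Lemma~\ref{lem:tech_a}}

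The plan is to use the \enquote{method of mixtures} from~\cite{abbasi2011online}: mix the exponential supermartingales $\nu_n(\xi)$ of Lemma~\ref{lem:super-martingale} over a Gaussian prior on the free parameter $\xi$, evaluate the resulting mixture in closed form, and conclude by Markov's inequality. Concretely, let $\Lambda$ be the standard Gaussian measure on $\R^m$, with density $(2\pi)^{-m/2}\exp(-\tfrac12\xi^\transp\xi)$, and set $M_n := \int_{\R^m}\nu_n(\xi)\,d\Lambda(\xi)$.

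First I would show that $(M_n)_{n\in\mathbb{N}}$ is itself a non-negative supermartingale with $\E[M_n]\le 1$. Since all integrands are non-negative, the conditional Tonelli theorem gives $\E[M_{n+1}\mid\mathcal F_n]=\int_{\R^m}\E[\nu_{n+1}(\xi)\mid\mathcal F_n]\,d\Lambda(\xi)\le\int_{\R^m}\nu_n(\xi)\,d\Lambda(\xi)=M_n$, using the supermartingale property established in Lemma~\ref{lem:super-martingale}, and $\E[M_1]\le\int_{\R^m}1\,d\Lambda(\xi)=1$. Exactly as in the proof of Corollary~\ref{cor:stopping_times}, the stopped process $(M_{\tau\wedge n})_{n\in\mathbb{N}}$ is then a non-negative supermartingale; it converges almost surely to a limit we denote $M_\tau$, and Fatou's lemma yields $\E[M_\tau]\le\liminf_n\E[M_{\tau\wedge n}]\le 1$.

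Next I would compute $M_n$ explicitly. Rewriting $\nu_n(\xi)=\exp\!\big(\tfrac1\sigma\xi^\transp s_n-\tfrac12(\xi^\transp\xi)v_n\big)$ and completing the square in the exponent against the Gaussian density turns the $\xi$-integral into a Gaussian integral with precision $v_n+1$, giving
\[
M_n=(v_n+1)^{-m/2}\exp\!\left(\frac{s_n^\transp s_n}{2\sigma^2(v_n+1)}\right).
\]
Since this identity holds for every fixed index, substituting the bounded stopping time $\tau\wedge n$ and letting $n\to\infty$ identifies $M_\tau=(v_\tau+1)^{-m/2}\exp\!\big(\tfrac{s_\tau^\transp s_\tau}{2\sigma^2(v_\tau+1)}\big)$, where $s_\tau$ and $v_\tau$ are understood as the almost-sure limits of $s_{\tau\wedge n}$ and $v_{\tau\wedge n}$ (the limit $v_\tau$ always exists since $v_n$ is non-decreasing). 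Finally, Markov's inequality applied to $M_\tau\ge 0$ gives $\mathbb{P}[M_\tau\ge 1/\delta]\le\delta\,\E[M_\tau]\le\delta$, so with probability at least $1-\delta$ we have $(v_\tau+1)^{-m/2}\exp\!\big(\tfrac{s_\tau^\transp s_\tau}{2\sigma^2(v_\tau+1)}\big)\le 1/\delta$; taking logarithms and rearranging yields $s_\tau^\transp s_\tau\le 2\sigma^2\ln\!\big[(v_\tau+1)^{m/2}/\delta\big](v_\tau+1)$, which is the claim.

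The Gaussian integral and the final algebraic rearrangement are routine; I expect the main obstacle to be the measure-theoretic bookkeeping around the stopping time---showing that the mixture $M_n$ inherits the (stopped) supermartingale structure via conditional Tonelli, and that the closed-form expression for $M_n$ passes correctly to $M_\tau$ on the event $\{\tau=\infty\}$. This is precisely the role played by Lemma~\ref{lem:super-martingale} and Corollary~\ref{cor:stopping_times}, so the argument should go through cleanly by invoking them.
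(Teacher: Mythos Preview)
Your proposal is correct and is essentially the same method-of-mixtures argument the paper uses: integrate the supermartingales $\nu_n(\xi)$ of Lemma~\ref{lem:super-martingale} against a standard Gaussian in~$\xi$, evaluate the Gaussian integral in closed form, and conclude by Markov's inequality. The only cosmetic difference is the order of operations: the paper first applies the stopping argument to each $\nu_n(\xi)$ (via Corollary~\ref{cor:stopping_times}) and then mixes by introducing an auxiliary $\Xi\sim\mathcal N(0,I)$ and conditioning on $\mathcal F_\infty$, whereas you mix first to form $M_n$ and then run the stopping/Fatou argument directly on $M_n$; both routes yield the identical closed form $(v_\tau+1)^{-m/2}\exp\!\big(s_\tau^\transp s_\tau/[2\sigma^2(v_\tau+1)]\big)$ and the same bound.
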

\begin{proof}
    Without loss of generality, let $\sigma=1$.
    For every $\xi\in\R^m$, let
    \[
        \nu_n(\xi)\coloneqq\exp\left(\xi^\transp s_n - \frac{1}{2}\xi^\transp\xi v_n\right).
    \]
    From Corollary~\ref{cor:stopping_times}, we note that, for every $\xi\in\R^m$, $\E[\nu_\tau(\xi)]\le 1$.
    Let now $\Xi$ be a $\mathcal{N}(0, I)$ random vector, independent of all other variables.
    Clearly, $\E[\nu_\tau(\Xi)\mid\Xi]\le 1$.
    Define
    \[
        \nu_\tau\coloneqq \E[\nu_\tau(\Xi)\mid\mathcal{F}_\infty].
    \]
    Then, $\E[\nu_\tau]\le 1$, since, by the tower property of conditional expectations (see \cite[Theorem~9.7(i)]{Williams-91}),
    \begin{align*}
        \E[\nu_\tau] &= \E[\E[\nu_\tau(\Xi)\mid\mathcal{F}_\infty]]\\
        &= \E[\nu_\tau(\Xi)]\\
        &= \E[\E[\nu_\tau(\Xi)\mid\Xi]]\\
        &\le 1.
    \end{align*}
    We can also express $\nu_\tau$ directly as
    \begin{align*}
        \nu_\tau &= \frac{1}{(2\pi)^{\frac{m}{2}}}\int_{\R^m}\exp\left(\xi^\transp s_\tau - \frac{1}{2}\xi^\transp\xi v_\tau\right)\exp\left(-\frac{1}{2}\xi^\transp\xi\right)\diff\xi\\
        &= \frac{1}{(2\pi)^{\frac{m}{2}}}\int_{\R^m} \exp\left(\xi^\transp s_\tau - \frac{1}{2}\xi^\transp\xi v_\tau - \frac{1}{2}\xi^\transp\xi\right)\diff\xi\\
        &= \frac{1}{(2\pi)^{\frac{m}{2}}}\int_{\R^m} \exp\left(\xi^\transp s_\tau - \frac{1}{2}(v_\tau + 1)\xi^\transp\xi\right)\diff\xi\\
        &= \frac{1}{(2\pi)^{\frac{m}{2}}}\int_{\R^m}  \exp\left(\xi^\transp s_\tau - \frac{1}{2}(v_\tau + 1)\xi^\transp\xi \right.\\
        &\qquad\qquad\qquad\qquad\; +\left. \frac{1}{2}\frac{1}{v_\tau + 1}s_\tau^\transp s_\tau - \frac{1}{2}\frac{1}{v_\tau + 1}s_\tau^\transp s_\tau\right)\diff\xi\\
        &= \frac{1}{(2\pi)^{\frac{m}{2}}}\int_{\R^m}\exp\left(-\frac{1}{2}(v_\tau+1)\left\|\xi-\frac{1}{v_\tau+1}s_\tau\right\|_2^2\right)\\
        &\qquad\qquad\qquad\;\cdot \exp\left(\frac{1}{2}\frac{1}{v_\tau+1}s_\tau^\transp s_\tau\right)\diff\xi\\
        &=(v_\tau+1)^{-\frac{m}{2}}\exp\left(\frac{1}{2}\frac{1}{v_\tau+1}s_\tau^\transp s_\tau\right).
    \end{align*}
    The last step follows as 
    \begin{align*}
\left(\frac{v_r+1}{2 \pi}\right)^\frac{m}{2} \exp\left( - \frac{1}{2} (v_\tau + 1) \left\| \xi - \frac{1}{v_\tau + 1} s_\tau \right\|_2^2 \right)
\end{align*}
is the probability density function of an $m$-dimensional normal distribution $\displaystyle \mathcal{N}\left( \frac{1}{v_\tau + 1} s_\tau, \frac{1}{v_\tau + 1} \right)$, so its integral over $\mathbb{R}^m$ is equal to $1$.

    Therefore, $\mathbb{P}[\delta\nu_\tau > 1]$ is equal to
    \begin{align*}
        &\mathbb{P}\left[\frac{\delta}{(v_\tau+1)^{\frac{m}{2}}}\exp\left(\frac{1}{2}\frac{s_\tau^\transp s_\tau}{v_\tau+1}\right) > 1\right]\\
        &\qquad =\mathbb{P}\left[\exp\left(\frac{1}{2}\frac{s_\tau^\transp s_\tau}{v_\tau+1}\right) > \frac{(v_\tau+1)^{\frac{m}{2}}}{\delta}\right]\\
        &\qquad = \mathbb{P}\left[s_\tau^\transp s_\tau > 2\ln\left(\frac{(v_\tau+1)^{\frac{m}{2}}}{\delta}\right)(v_\tau+1)\right].
    \end{align*}
    Recall now that $\E[\nu_\tau]\le 1$.
    Hence, due to Markov's inequality (see \cite[Theorem~6.4]{Williams-91}),
    \[
        \mathbb{P}[\delta\nu_\tau > 1]\le \delta\E[\nu_\tau]\le \delta,
    \]
    which completes the proof.
\end{proof}
\begin{remark}
    We can here set $\sigma=1$ without loss of generality because it is a proxy standard deviation of $\omega_i$. Thus, we can assume that $\sigma = 1$ by working with the scaled variables $\tilde{\omega}_i = \omega_i / \sigma$, which are conditionally $1$-sub Gaussian, and also defining $\tilde s_n = s_n / \sigma$. Once Lemma~\ref{lem:tech_a} is established for these variables, one can revert back to $\omega_i$ and $s_n$ by reintroducing $\sigma$.
\end{remark}

\begin{cor}
    \label{cor:tech_a}
    Under the same assumptions as in Lemma~\ref{lem:tech_a}, for every $0<\delta<1$, it holds with probability at least $1-\delta$ that
    \[
        s_n^\transp s_n \le 2\sigma^2\ln\left(\frac{(v_n+1)^{\frac{m}{2}}}{\delta}\right)(v_n+1)
    \]
    for all $n\in\mathbb{N}$.
\end{cor}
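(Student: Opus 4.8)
The plan is to upgrade Lemma~\ref{lem:tech_a}, which controls $s_\tau^\transp s_\tau$ for a single (arbitrary) stopping time $\tau$, into a bound that holds simultaneously for every $n\in\mathbb{N}$, via the classical stopping-time device underlying the maximal inequalities of~\cite{abbasi2011online}: introduce the first time the target inequality is violated, check that it is a stopping time, and then apply Lemma~\ref{lem:tech_a} at that random time.

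Concretely, I would first introduce the ``bad'' event
\[
    B \coloneqq \left\{\exists\, n\in\mathbb{N}\colon\ s_n^\transp s_n > 2\sigma^2\ln\!\left(\frac{(v_n+1)^{\frac{m}{2}}}{\delta}\right)(v_n+1)\right\}
\]
and let $\tau$ be the smallest such $n$, with the convention $\tau = +\infty$ on $B^{\mathrm{c}}$. The point to verify is that $\tau$ is a genuine $(\mathcal{F}_n)$-stopping time: for each fixed $n$, the event $\{\tau = n\}$ is described entirely through $s_1,\dots,s_n$ and $v_1,\dots,v_n$, all of which are $\mathcal{F}_n$-measurable---$s_n$ because $\omega_n$ is, and $v_n$ is even $\mathcal{F}_{n-1}$-measurable since $(p_n)$ is predictable. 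Hence Lemma~\ref{lem:tech_a} may be invoked with this particular $\tau$.

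Next I would note that $B = \{\tau<\infty\}$ and that, on $\{\tau<\infty\}$, the defining strict inequality is attained at the finite index $\tau$, so that $\{\tau<\infty\}\subseteq\{\,s_\tau^\transp s_\tau > 2\sigma^2\ln((v_\tau+1)^{m/2}/\delta)(v_\tau+1)\,\}$. Consequently
\[
    \mathbb{P}[B] = \mathbb{P}[\tau<\infty]\ \le\ \mathbb{P}\!\left[s_\tau^\transp s_\tau > 2\sigma^2\ln\!\left(\frac{(v_\tau+1)^{\frac{m}{2}}}{\delta}\right)(v_\tau+1)\right]\ \le\ \delta,
\]
the last step being precisely the complement of the event certified by Lemma~\ref{lem:tech_a} at $\tau$. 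Taking complements gives $\mathbb{P}[B^{\mathrm{c}}]\ge 1-\delta$, and on $B^{\mathrm{c}}$ the asserted inequality holds for every $n\in\mathbb{N}$ at once, which is the claim.

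The only delicate point---more a matter of care than a real obstacle---is that applying Lemma~\ref{lem:tech_a} to our $\tau$ must remain legitimate even though $\mathbb{P}[\tau=\infty]$ may be positive; this is already accommodated by the way that lemma (through Corollary~\ref{cor:stopping_times} and the almost sure limit defining $\nu_\tau$) treats stopping times valued in $\mathbb{N}\cup\{+\infty\}$, so no extra truncation or integrability step is needed. Everything else is bookkeeping.
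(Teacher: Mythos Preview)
Your argument is correct and follows essentially the same route as the paper: define $\tau$ as the first time the bound is violated (with $\tau=\infty$ otherwise), observe that this is an $(\mathcal{F}_n)$-stopping time, and apply Lemma~\ref{lem:tech_a} to conclude $\mathbb{P}[\tau<\infty]\le\delta$. Your write-up is slightly more explicit about measurability and the handling of $\mathbb{P}[\tau=\infty]>0$, but the method is identical.
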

\begin{proof}
    Let $\tau = \min\{n\in\mathbb{N}\colon s_n^\transp s_n>2\sigma^2\ln[(v_n+1)^{m/2}/\delta](v_n+1)\}$, and $\tau=\infty$ if $s_n^\transp s_n\le 2\sigma^2 \ln[(v_n+1)^{m/2} / \delta](v_n+1)$ for all $n$.
    It is clear that $\tau$ is a stopping time in $\mathbb{N}\cup\{+\infty\}$.
    Further, from the definition of $\tau$, we have that $\tau<\infty$ if, and only if, $s_n^\transp s_n > s\sigma^2\ln[(v_n+1)^{m/2}/\delta](v_n+1)$ for some $n\in\mathbb{N}$.
    Thus, the corollary will be established if we prove that $\mathbb{P}[\tau < \infty]\le \delta$.

    Now, by Lemma~\ref{lem:tech_a},
    \begin{align*}
        \mathbb{P}[\tau<\infty] &= \mathbb{P}\left[s_\tau^\transp s_\tau > 2\sigma^2\ln\left(\frac{(v_\tau+1)^{\frac{m}{2}}}{\delta}\right)(v_\tau+1), \tau<\infty\right]\\
        &\le \mathbb{P}\left[s_\tau^\transp s_\tau > 2\sigma^2\ln\left(\frac{(v_\tau+1)^{\frac{m}{2}}}{\delta}\right)(v_\tau+1)\right]\\
        &\le \delta.
    \end{align*}
    This concludes the proof.
\end{proof}

With these technical preliminaries, we can finally prove Lemma~\ref{lem:bounds}.
We begin with the observation that
\begin{multline}
\label{eqn:begin_obs}
    \norm{\mu_n(a, i) - h(a,i)} = \left\lVert\sum\limits_{r=1}^n\frac{K_\lambda(a, a_r)}{\kappa_n(a)}\hat{h}_r - h(a, i)\right\lVert_2 \\
    \le \sum\limits_{r=1}^n\theta_r\norm{h(a_n,i) - h(a, i)} + \left\lVert\sum\limits_{r=1}^n\theta_r\omega_r\right\rVert_2,
\end{multline}
where $\theta_r\coloneqq\frac{K_\lambda(a, a_r)}{\kappa_n(a)}$.
Note, that $\sum_{r=1}^n\theta_r = 1$.
Due to Assumption~\ref{ass:kernel}, if $K_\lambda(a, a_n)>1$, then $\frac{\norm{a-a_n}}{\lambda}\le 1$.
Therefore, see Assumption~\ref{ass:smoothness_assumption}, if $K_\lambda(a, a_n)>1$, then
\[
    \norm{h(a_n) - h(a)}\le L\norm{a_n - a} \le L\lambda,
\]
and, since the weights $\theta_r$ sum up to 1,
\[
    \sum\limits_{r=1}^n\theta_r\norm{h(a_n, i) - h(a, i)}\le L\lambda.
\]
For the last term in~\eqref{eqn:begin_obs}, observe that
\begin{equation}
\label{eqn:next_obs}
    \left\lVert\sum\limits_{r=1}^n\theta_r\omega_r\right\rVert = \frac{1}{\kappa_n(a)}\left\lVert\sum\limits_{r=1}^nK_\lambda(a, a_r)\omega_r\right\rVert_2.
\end{equation}
According to Corollary~\ref{cor:tech_a}, with $p_r = K_\lambda(a, a_r)$, the right-hand side of~\eqref{eqn:next_obs} is upper bounded, with probability at least $1-\delta$, for all $n\in\mathbb{N}$, by
\begin{align*}
    &\frac{1}{\kappa_n(a)}\sigma \cdot \\
    &\sqrt{2\ln\left(\frac{1}{\delta}\left(1+\sum\limits_{r=1}^nK_\lambda^2(a, a_r)\right)^{\frac{m_i}{2}}\right)\left(1+\sum\limits_{r=1}^nK_\lambda^2(a, a_r)\right)}.
\end{align*}
Furthermore, since $K_\lambda(a, a_n)\le 1$ (see Assumption~\ref{ass:kernel}), we obtain
\begin{multline*}
    \frac{1}{\kappa_n(a)}\left\lVert\sum\limits_{r=1}^n K_\lambda(a, a_r)\omega_r\right\rVert_2 \\
    \le \sigma\sqrt{2\ln\left(\frac{(1+\kappa_n(a))^{\frac{m_i}{2}}}{\delta}\right)}\frac{\sqrt{1+\kappa_n(a)}}{\kappa_n(a)}.
\end{multline*}
Observe next that, if $\kappa_n(a) > 1$,
\[
    \frac{\sqrt{1+\kappa_n(a)}}{\kappa_n(a)} < \frac{\sqrt{2\kappa_n(a)}}{\kappa_n(a)} = \frac{\sqrt{2}}{\sqrt{\kappa_n(a)}}.
\]
Therefore, with probability at least $1-\delta$, for all $n\in\mathbb{N}$ such that $\kappa_n(a) > 1$,
\begin{multline*}
    \frac{1}{\kappa_n(a)}\left\lVert\sum\limits_{r=1}^nK_\lambda(a, a_r)\omega_r\right\rVert_2\\
    \le \frac{2\sigma}{\kappa_n(a)}\sqrt{\kappa_n(a)\ln\left(\frac{(1+\kappa_n(a))^{\frac{m_i}{2}}}{\delta}\right)},
\end{multline*}
whereas for all $n\in\mathbb{N}$ such that $0<\kappa_n(a)\le 1$,
\begin{align*}
    &\frac{1}{\kappa_n(a)}\abs{\sum\limits_{r=1}^nK_\lambda(a, a_r)\omega_r}\\
    &\qquad \le \frac{\sigma}{\kappa_n(a)}\sqrt{2\ln\left(\frac{(1+\kappa_n(a))^{\frac{m_i}{2}}}{\delta}\right)}\sqrt{1+\kappa_n(a)}\\
    &\qquad \le \frac{2\sigma}{\kappa_n(a)}\sqrt{\frac{\ln\left(2^{\frac{m_i}{2}}\right)}{\delta}}.
\end{align*}

For each $a\in\mathcal{A}$, the result holds uniformly on $n\in\mathbb{N}$ with probability at least $1-\delta/D$, so the proof is completed by a union bound over $a\in\mathcal{A}$.

\subsection{Proof of Lemma~\ref{lem:bound_uncertainty}}
\label{sec:proof_lem_2}

Following Assumption \ref{ass:kernel}, we note that for every $a_n$ in the support of the kernel,
\[
    \chi_\mathrm{K}\le K_\lambda(a, a_n).
\]
Hence, for a fixed value of $\delta$, we have
\[
    \frac{\sqrt{\ln \left( \frac{2^{\frac{1}{2}}}{\delta}D \right)}}{\kappa_n(a)}\le \frac{\sqrt{\ln \left( \frac{2^{\frac{1}{2}}}{\delta}D \right)}}{N_{\bar{\beta}}\chi_\mathrm{K}}
\]
if $\kappa_n(a)\le 1$, and
\[
    \frac{\sqrt{\kappa_n(a) \ln \left( \frac{(1+\kappa_n(a))^{\frac{1}{2}}}{\delta}D \right)}}{\kappa_n(a)}\le \frac{\sqrt{N_{\bar{\beta}} \chi_\mathrm{K} \ln \left( \frac{(1+N_{\bar{\beta}}\chi_\mathrm{K})^{\frac{1}{2}}}{\delta}D \right)}}{N_{\bar{\beta}}\chi_\mathrm{K}}
\]
if $\kappa_n(a) > 1$.
From this, the result follows.

\subsection{Proof of Theorem~\ref{thm:safety}}
\label{sec:proof_thm_1}

We collect all variables and operators relevant to the proof of Theorem~\ref{thm:safety} in \tabref{tab:notation}.

\begin{table}
    \centering
    \caption{Summary of main variables and operators.}
    \label{tab:notation}
    \begin{tabular}{ll}
        \toprule
        $a$ & Policy parameters\\
        $f$ & Reward function \\
        $g_i$ & Constraint function $i$\\
        $h_i$ & Surrogate selector function for index $i$\\
        $\mathcal{I}$ & Set of all indices\\
        $\mathcal{I}_\mathrm{g}$ & Set of all indices pertaining to constraints (\ie $i>0$)\\
        $\mathcal{A}$ & Parameter space\\
        $S_n$ & Safe set at iteration $n$\\
        $Q_n$ & Confidence intervals at iteration $n$\\
        $C_n$ & Contained set at iteration $n$\\
        $M_n$ & Set of potential maximizers at iteration $n$\\
        $G_n$ & Set of potential expanders at iteration $n$\\
        $u_n(a,i)$ & Upper bound of the confidence interval at iteration $n$ \\
        $l_n(a, i)$ & Lower bound of the confidence interval at iteration $n$\\
        $w_n(a,i)$ & $\coloneqq u_n(a,i) - l_n(a,i)$ \\
        $\bar{\beta}$ & Uncertainty upper bound\\
        $R_{\bar{\beta}}$ & Reachability operator\\
        $\bar{R}_{\bar{\beta}}$ & Closure of $R_{\bar{\beta}}$\\
        $L$ & Lipschitz constant\\
        \bottomrule
    \end{tabular}
\end{table}

Given a value of $\bar{\beta} > L \lambda$, let $N_{\bar{\beta}} \geq 1$ be defined as in the paragraph following Lemma~\ref{lem:bound_uncertainty}.

The proof of Theorem~\ref{thm:safety} basically involves three steps: we need to show \emph{(i)} safety, \emph{(ii)} that the algorithm stops after a finite $n$, and \emph{(iii)} that after those finite steps, we are close to the optimum.
As the algorithm is similar to \safeopt, the proof strategy follows the one proposed in~\cite{berkenkamp2021bayesian}.

\subsubsection{Safety}
Before providing the proof, let us state again the safety result.   
If $h$ is $L$-Lipschitz-continuous, then, with probability at least $1-\delta$, for all $n\ge 0$, $i\in\mathcal{I}_\mathrm{g}$ and $a\in S_n$, the following holds:
\[
    \norm{g(a, i)}\ge c_i.
\]

We will prove this by induction.
For the base case $n=0$, by definition, for any $a\in S_0$ and $i\in\mathcal{I}_\mathrm{g}$, $\norm{g_i(a)}\ge c_i$.

For the induction step, assume that for some $n\ge 1$, for any $a\in S_{n-1}$ and for all $i\in\mathcal{I}_\mathrm{g}$, $\norm{g_i(a)}\ge c_i$.
Then, for any $a\in S_n$, by definition, for all $i\in\mathcal{I}_\mathrm{g}$, there exists a $z_i\in S_{n-1}$ such that
\begin{align*}
    c_i &\le l_n(z_i, i) - L\norm{z_i - a}\\
    &\le \norm{g(z_i, i)} - L\norm{z_i - a}\tag*{by Lemma~\ref{lem:bounds}}\\
    &\le \norm{g(a, i)} \tag*{by Lipschitz-continuity.}
\end{align*}

\subsubsection{Finite iterations}
To prove that the algorithm stops after finite time, we first require some properties of the sets and functions we defined before.
Most importantly, the upper confidence bounds are decreasing, and the lower confidence bounds are increasing with the number of iterations since we have $C_{n+1}\supseteq C_n$ for all iterations $n$.
As the set definitions are the same as in~\cite{berkenkamp2021bayesian}, the proof follows as shown therein.
\begin{lem}[{\cite[Lemma~7.1]{berkenkamp2021bayesian}}]
\label{lem:set_properties}
    The following hold for any $n\ge 1$:
    \begin{itemize}
        \item[(i)] $\forall a\in\mathcal{A}$, $\forall i\in\mathcal{I}$, $u_{n+1}(a,i)\le u_n(a,i)$,
        \item[(ii)] $\forall a\in\mathcal{A}$, $\forall i\in\mathcal{I}$, $l_{n+1}(a,i)\ge l_n(a,i)$,
        \item[(iii)] $\forall a\in\mathcal{A}$, $\forall i\in\mathcal{I}$, $w_{n+1}(a,i)\le w_n(a,i)$,
        \item[(iv)] $S_{n+1}\supseteq S_n\supseteq S_0$,
        \item[(v)] $S\subseteq R \implies R_{\bar{\beta}} (S)\subseteq R_{\bar{\beta}}(R)$,
        \item[(vi)] $S\subseteq R \implies \bar{R}_{\bar{\beta}} (S)\subseteq \bar{R}_{\bar{\beta}}(R)$.
    \end{itemize}
\end{lem}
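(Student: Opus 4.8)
The plan is to reproduce the reasoning of \cite[Lemma~7.1]{berkenkamp2021bayesian}, the key structural fact being that the contained sets are nested by construction: from \eqref{eqn:contained}, $C_{n+1}(a,i) = C_n(a,i)\cap Q_{n+1}(a,i)\subseteq C_n(a,i)$ for all $a\in\mathcal{A}$, $i\in\mathcal{I}$ and $n\ge 1$. On the event where Lemma~\ref{lem:bounds} holds (probability at least $1-\delta$), $h(a,i)$ belongs to every $Q_n(a,i)$, hence to $C_n(a,i)$, so these sets are nonempty and $l_n(a,i)=\min C_n(a,i)$, $u_n(a,i)=\max C_n(a,i)$ are well defined. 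Since shrinking a set cannot decrease its infimum nor increase its supremum, (i) $u_{n+1}(a,i)\le u_n(a,i)$ and (ii) $l_{n+1}(a,i)\ge l_n(a,i)$ follow immediately, and (iii) $w_{n+1}(a,i)=u_{n+1}(a,i)-l_{n+1}(a,i)\le u_n(a,i)-l_n(a,i)=w_n(a,i)$ follows by subtraction.

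For (iv) I would prove $S_{n-1}\subseteq S_n$ by induction on $n$, which then chains to give $S_n\supseteq S_0$. The base case $S_0\subseteq S_1$ uses the initialization of $C_0$: for $a\in S_0$ and $i\in\mathcal{I}_\mathrm{g}$ the true value is known to lie in $[c_i,\infty)$, so $l_0(a,i)\ge c_i$, whence $l_1(a,i)\ge l_0(a,i)\ge c_i$ by (ii), and taking $a'=a$ in \eqref{eqn:safe_set_update} places $a$ in $S_1$. For the inductive step, assume $S_{n-1}\subseteq S_n$ and fix $a'\in S_n$; by \eqref{eqn:safe_set_update}, for each $i\in\mathcal{I}_\mathrm{g}$ there is $a\in S_{n-1}$ with $l_n(a,i)-L\norm{a-a'}\ge c_i$, and since $a\in S_{n-1}\subseteq S_n$ by hypothesis and $l_{n+1}(a,i)\ge l_n(a,i)$ by (ii), the same $a$ certifies $a'\in S_{n+1}$. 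Hence $S_n\subseteq S_{n+1}$.

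Parts (v) and (vi) are pure monotonicity of the reachability map \eqref{eqn:reachability}. For (v), if $S\subseteq R$ then $S\subseteq R\subseteq R_{\bar{\beta}}(R)$, and whenever some $a'\in S$ witnesses that $a$ belongs to $R_{\bar{\beta}}(S)$, the same $a'$ lies in $R$ and witnesses $a\in R_{\bar{\beta}}(R)$; intersecting over $i\in\mathcal{I}_\mathrm{g}$ then gives $R_{\bar{\beta}}(S)\subseteq R_{\bar{\beta}}(R)$. For (vi), since $S\subseteq R_{\bar{\beta}}(S)$ the iterates $R_{\bar{\beta}}^n(S)$ are increasing and $\bar{R}_{\bar{\beta}}(S)=\bigcup_{n\ge 1}R_{\bar{\beta}}^n(S)$; applying (v) repeatedly yields $R_{\bar{\beta}}^n(S)\subseteq R_{\bar{\beta}}^n(R)$ for every $n$, and taking the union gives $\bar{R}_{\bar{\beta}}(S)\subseteq\bar{R}_{\bar{\beta}}(R)$. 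None of the steps is genuinely difficult; the only point needing care is the base case of (iv), which relies on initializing $C_0(a,i)$ so that $l_0(a,i)\ge c_i$ for $a\in S_0$ — once that convention is fixed, the remainder is routine monotone-induction bookkeeping.
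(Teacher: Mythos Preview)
Your proposal is correct and matches the paper's approach: the paper does not spell out a proof here but only notes the nesting of the contained sets (with a sign typo, writing $C_{n+1}\supseteq C_n$ where $\subseteq$ is meant) and defers to \cite[Lemma~7.1]{berkenkamp2021bayesian}, whose argument you have faithfully reconstructed. Your explicit flag that the base case of (iv) hinges on the convention $l_0(a,i)\ge c_i$ for $a\in S_0$ is appropriate, since the paper's description of $C_0$ as ``a subset of $[0,\infty)$'' leaves this implicit.
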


Using these results, we first show that the safe set eventually stops expanding.
\begin{lem}
\label{lem:final_Sn}
There exists an $n^* \ge 0$ (possibly random, but always finite) such that, for all $n > n^*$, $S_n = S_{n^*}$.
\end{lem}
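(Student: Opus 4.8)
The plan is to exploit monotonicity of the safe sets together with finiteness of the parameter space, so that the statement reduces to the elementary fact that a bounded, non-decreasing sequence of integers is eventually constant. The only structural input I would invoke is Lemma~\ref{lem:set_properties}(iv), namely $S_0 \subseteq S_n \subseteq S_{n+1}$ for every $n$; everything else is a counting argument.

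First I would record that $S_n \subseteq \mathcal{A}$ for all $n$ and that $\lvert\mathcal{A}\rvert \le D < \infty$ by hypothesis, so the cardinalities $\lvert S_n\rvert$ take values in $\{0,1,\dots,D\}$. By Lemma~\ref{lem:set_properties}(iv) the sequence $(\lvert S_n\rvert)_{n\ge 0}$ is non-decreasing and bounded above by $D$, hence it can strictly increase at most $D - \lvert S_0\rvert \le D$ times. Define $n^* := \min\{\, n \ge 0 : \lvert S_m\rvert = \lvert S_n\rvert \text{ for all } m \ge n \,\}$; this set is nonempty (the cardinality sequence is eventually constant) so $n^*$ is well defined, and by the preceding count $n^* \le D$ always, which gives the claimed almost-sure finiteness. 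The randomness of $n^*$ enters only through the realized measurements $\hat h_{1:n}$ that determine the algorithm's trajectory, but the deterministic bound $n^* \le D$ holds on every sample path.

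Next I would upgrade equality of cardinalities to equality of sets: for any $n > n^*$ we have $S_{n^*} \subseteq S_n$ by monotonicity, while $\lvert S_n\rvert = \lvert S_{n^*}\rvert < \infty$; a finite set contained in another of the same cardinality must coincide with it, so $S_n = S_{n^*}$. This is exactly the assertion of the lemma.

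I do not anticipate a real obstacle here; the one point that needs a line of care is confirming that $n^*$ is genuinely finite on \emph{every} realization rather than merely almost surely, which is why I would phrase the bound as the deterministic inequality $n^* \le D$ coming from $\lvert\mathcal{A}\rvert \le D$, rather than appealing to any probabilistic convergence statement. If one prefers, the same conclusion follows by observing that the events $\{S_{n+1} \supsetneq S_n\}$ can occur for at most $D$ distinct indices $n$, so beyond the last such index the safe set is frozen.
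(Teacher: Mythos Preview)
Your core argument is correct and essentially the same as the paper's: both exploit monotonicity of $(S_n)$ from Lemma~\ref{lem:set_properties}(iv) together with finiteness of $\mathcal{A}$. The paper phrases it via $\bar S := \bigcup_n S_n$ and $n^* := \max_{a\in\bar S}\min\{n: a\in S_n\}$, whereas you track cardinalities; these are equivalent bookkeeping choices.

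One auxiliary claim in your write-up is wrong, though it is not needed for the lemma: the bound $n^* \le D$ does not follow from ``the cardinality can strictly increase at most $D - |S_0|$ times.'' That count limits the \emph{number} of jumps of $(|S_n|)$, not the \emph{time} of the last jump. Nothing in the algorithm forces the safe set to grow within $D$ iterations; it can remain unchanged for arbitrarily many steps (while uncertainty is being reduced through repeated sampling) and only then expand, so $n^*$ admits no deterministic bound in terms of $D$ alone. Your separate remark that a bounded non-decreasing integer sequence is eventually constant already yields finiteness of $n^*$ on every sample path, so you should simply delete the $n^* \le D$ claim and the sentence that leans on it.
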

\begin{proof}
By Lemma~\ref{lem:set_properties} (iv), the sets $S_n$ are increasing in $n$, and $S_n \subseteq \mathcal{A}$ for all $n$, so $\bar{S} := \cup_{n=0}^\infty S_n \subseteq \mathcal{A}$ is finite. Furthermore, each $a \in \bar{S}$ belongs to $S_n$ for some $n$; let $n(a)$ be the smallest such $n$. Then, $n^* := \max_{a \in \bar{S}} n(a)$ is finite (because $\bar{S}$ is finite) and satisfies $S_{n^*} = \bar{S}$, so $S_n = S_{n^*}$ for all $n > n^*$.
\end{proof}
Next, note that the sets of expanders and maximizers are contained in each other as well once the safe set does not increase, as stated in the following lemma, which can be proven as in~\cite{berkenkamp2021bayesian}.
\begin{lem}[{\cite[Lemma~7.2]{berkenkamp2021bayesian}}]
\label{lem:max_exp}
    For any $n_1\ge n_0\ge 1$, if $S_{n_1}=S_{n_0}$, then, for any $n$ such that $n_0\le n < n_1$,
    \[
        G_{n+1}\cup M_{n+1} \subseteq G_n \cup M_n.
    \]
\end{lem}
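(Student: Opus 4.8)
The plan is to prove the slightly stronger statement that, under the hypothesis $S_{n_1}=S_{n_0}$, we in fact have $M_{n+1}\subseteq M_n$ and $G_{n+1}\subseteq G_n$ separately for every $n$ with $n_0\le n<n_1$; the inclusion for the union is then immediate. The first step is to note that the safe set is frozen on the relevant range: by Lemma~\ref{lem:set_properties}(iv) the sequence $(S_n)$ is nondecreasing, so for every $n\in\{n_0,\dots,n_1\}$ we have $S_{n_0}\subseteq S_n\subseteq S_{n_1}=S_{n_0}$, hence $S_n=S_{n_0}$ throughout. In particular, for $n_0\le n<n_1$ both $S_n=S_{n+1}$ and $\mathcal{A}\setminus S_n=\mathcal{A}\setminus S_{n+1}$, which is the key structural fact that makes the rest routine.

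Next I would handle the maximizers. Fix $a\in M_{n+1}$; then $a\in S_{n+1}=S_n$ and $u_{n+1}(a,0)\ge\max_{a'\in S_{n+1}}l_{n+1}(a',0)$. Using Lemma~\ref{lem:set_properties}(i) to get $u_n(a,0)\ge u_{n+1}(a,0)$, and Lemma~\ref{lem:set_properties}(ii) together with $S_n=S_{n+1}$ to get $\max_{a'\in S_{n+1}}l_{n+1}(a',0)=\max_{a'\in S_n}l_{n+1}(a',0)\ge\max_{a'\in S_n}l_n(a',0)$, chaining these inequalities yields $u_n(a,0)\ge\max_{a'\in S_n}l_n(a',0)$, i.e.\ $a\in M_n$. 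The expander case is analogous: fix $a\in G_{n+1}$, so $a\in S_{n+1}=S_n$ and there exist $a'\in\mathcal{A}\setminus S_{n+1}=\mathcal{A}\setminus S_n$ and $i\in\mathcal{I}_\mathrm{g}$ with $u_{n+1}(a,i)-L\norm{a-a'}\ge c_i$; applying Lemma~\ref{lem:set_properties}(i) once more gives $u_n(a,i)-L\norm{a-a'}\ge u_{n+1}(a,i)-L\norm{a-a'}\ge c_i$, so the same pair $(a',i)$ witnesses $a\in G_n$. Taking the union of the two inclusions gives $G_{n+1}\cup M_{n+1}\subseteq G_n\cup M_n$.

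There is no genuine obstacle here beyond bookkeeping: the whole argument rests on the monotonicity properties already collected in Lemma~\ref{lem:set_properties}, together with the observation that $S_{n_1}=S_{n_0}$ forces $S_n$ to be constant on the intermediate range. The one point that deserves explicit care is exactly that last observation, since the definition of $G_n$ quantifies over $a'\in\mathcal{A}\setminus S_n$, and one must make sure this complement does not change with $n$ on the range in question; this is where the nondecreasing property of $(S_n)$ is used. If one preferred to follow~\cite{berkenkamp2021bayesian} literally, one could argue the union inclusion without splitting off $M$ and $G$, but the split version above is cleaner and gives a marginally stronger conclusion.
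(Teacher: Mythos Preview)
Your proof is correct and is exactly the standard argument: the paper itself does not spell out a proof but defers to~\cite[Lemma~7.2]{berkenkamp2021bayesian}, and your approach---freezing $S_n$ on $\{n_0,\dots,n_1\}$ via Lemma~\ref{lem:set_properties}(iv) and then using the monotonicity of $u_n$ and $l_n$ from Lemma~\ref{lem:set_properties}(i)--(ii) to handle $M_n$ and $G_n$ separately---is precisely what that reference does.
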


When running \colsafe, we repeatedly choose the most uncertain element from $G_n$ and $M_n$.
Since these sets are contained in each other, if the safe set does not expand, we gain more information about these sets with each sample.
Further, after a finite number of iterations of \colsafe, we can guarantee that the uncertainty of each point in these sets is upper bounded as shown in Lemma~\ref{lem:bound_uncertainty}.
\begin{lem}
\label{cor:limit_uncertainty}
    For any $n\ge n^*$ 
    (where $n^*$ was defined in Lemma~\ref{lem:final_Sn}), we have for any $a\in G_{n+N_{\bar{\beta}} |\mathcal{A}|}\cup M_{n+N_{\bar{\beta}} |\mathcal{A}|}$ and $i\in\mathcal{I}$ that
    \[
        w_{n+N_{\bar{\beta}} |\mathcal{A}|}(a,i) \le 2\bar{\beta}.
    \]
\end{lem}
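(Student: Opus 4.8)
The plan is to argue by contradiction, combining the monotonicity statements of Lemmas~\ref{lem:set_properties} and~\ref{lem:max_exp} with a pigeonhole count over the $N_{\bar{\beta}}\abs{\mathcal{A}}$ selections that follow iteration $n$. Write $T:=N_{\bar{\beta}}\abs{\mathcal{A}}$ and suppose, for contradiction, that there are $a^\dagger\in G_{n+T}\cup M_{n+T}$ and $i_0\in\mathcal{I}$ with $w_{n+T}(a^\dagger,i_0)>2\bar{\beta}$.

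The first ingredient is that over the window $\{n,\dots,n+T\}$ the candidate set is frozen and nested. Since $n\ge n^*$, Lemma~\ref{lem:final_Sn} gives $S_m=S_{n^*}$ for all $m$ in this range, so Lemma~\ref{lem:max_exp} yields $G_{n+T}\cup M_{n+T}\subseteq G_m\cup M_m$; in particular $a^\dagger$ stays in the candidate set throughout. Combined with $w_m(a^\dagger,i_0)\ge w_{n+T}(a^\dagger,i_0)>2\bar{\beta}$ from Lemma~\ref{lem:set_properties}(iii), this shows that at every step $m\in\{n,\dots,n+T-1\}$ the point $a^\dagger$ is an eligible argument of the selection rule~\eqref{eqn:next_sample} of width larger than $2\bar{\beta}$, so the selected $a_m$ satisfies $\max_{i\in\mathcal{I}}w_m(a_m,i)\ge\max_{i\in\mathcal{I}}w_m(a^\dagger,i)>2\bar{\beta}$. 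Hence \emph{every} selection in the window has width exceeding $2\bar{\beta}$.

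The main step is then a counting argument. I would first show that no fixed $a'\in\mathcal{A}$ can be selected more than $N_{\bar{\beta}}$ times within $\{n,\dots,n+T-1\}$: after its $N_{\bar{\beta}}$-th selection (at some iteration $j$) the estimator has incorporated $N_{\bar{\beta}}$ measurements all located at $a'$ itself (hence trivially within distance $\lambda$ of $a'$), so Lemma~\ref{lem:bound_uncertainty} applies and gives $\beta_{j}(a',i)\le\bar{\beta}$ for all $i$; since $C_{j+1}(a',i)\subseteq Q_{j+1}(a',i)$ has length $2\beta_{j}(a',i)$ we get $w_{j+1}(a',i)\le2\bar{\beta}$, and by the width monotonicity of Lemma~\ref{lem:set_properties}(iii) the same holds at all later iterations --- so $a'$ cannot be selected again with width exceeding $2\bar{\beta}$, contradicting the previous paragraph. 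Therefore the $\abs{\mathcal{A}}-1$ parameters other than $a^\dagger$ account for at most $N_{\bar{\beta}}(\abs{\mathcal{A}}-1)$ of the $T$ selections, so $a^\dagger$ itself is selected at least $N_{\bar{\beta}}$ times; applying the same estimate to $a^\dagger$, after its $N_{\bar{\beta}}$-th selection at some $m^\dagger\le n+T-1$ we obtain $w_{m^\dagger+1}(a^\dagger,i_0)\le2\bar{\beta}$, hence $w_{n+T}(a^\dagger,i_0)\le2\bar{\beta}$ by monotonicity --- contradicting the assumption and proving the lemma.

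I expect the crux to be establishing the ``every selection in the window has width exceeding $2\bar{\beta}$'' claim, i.e., checking that $a^\dagger$ both remains in $G_m\cup M_m$ and keeps width above $2\bar{\beta}$ for all $T$ steps; this is exactly where the hypothesis $n\ge n^*$ is essential (it freezes and nests the candidate sets via Lemmas~\ref{lem:final_Sn} and~\ref{lem:max_exp}) and where the width monotonicity of Lemma~\ref{lem:set_properties}(iii) enters. A minor but necessary bookkeeping point is the passage from Lemma~\ref{lem:bound_uncertainty}'s bound on $\beta$ to a bound on $w_m$, which uses $C_m(a,i)\subseteq Q_m(a,i)$ and that $Q_m(a,i)$ has length $2\beta_{m-1}(a,i)$.
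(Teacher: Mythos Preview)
Your argument is correct and follows essentially the same strategy as the paper's proof: use $n\ge n^*$ together with Lemma~\ref{lem:max_exp} to nest the candidate sets, then combine the selection rule~\eqref{eqn:next_sample} with a pigeonhole count over the $N_{\bar{\beta}}\lvert\mathcal{A}\rvert$ selections and invoke Lemma~\ref{lem:bound_uncertainty}. Your write-up is in fact more explicit than the paper's, which compresses the two steps (``every selection in the window has width $>2\bar{\beta}$'' and ``hence no parameter can be chosen more than $N_{\bar{\beta}}$ times'') into a single sentence; your version also avoids the paper's slightly loose claim that the over-sampled element $a'$ lies in $G_{n+T}\cup M_{n+T}$, since your step~4 applies to any $a'\in\mathcal{A}$ selected in the window.
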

\begin{proof}
By Lemma~\ref{lem:max_exp}, $G_m\cup M_m$ is decreasing with respect to $m$ for $m \ge n$. Also, since the parameter $a_m$ is the most uncertain element of $G_m \cup M_m$, after sampling $N_{\bar{\beta}} |\mathcal{A}|$ times from $G_n \cup M_n$ it follows that each element $a \in G_{n+N_{\bar{\beta}} |\mathcal{A}|}\cup M_{n+N_{\bar{\beta}} |\mathcal{A}|}$ has been sampled at least $N_{\bar{\beta}}$ times, so its uncertainty is at most $2\bar{\beta}$, by Lemma~\ref{lem:bound_uncertainty}. Otherwise, there would be another element $a' \in G_{n+N_{\bar{\beta}} |\mathcal{A}|}\cup M_{n+N_{\bar{\beta}} |\mathcal{A}|}$ that has been selected more than $N_{\bar{\beta}}$ times, so the $(N_{\bar{\beta}}+1)$-th time it was selected its uncertainty was larger than that of $a$ (i.e., larger than $2 \bar{\beta}$). This contradicts Lemma~\ref{lem:bound_uncertainty}, which concludes the proof.
\end{proof}
According to Lemma~\ref{cor:limit_uncertainty}, after at most $N_{\bar{\beta}} |\mathcal{A}|$ steps since the safe sets stop expanding, the most uncertain element within the sets of potential maximizers and potential expanders has an uncertainty of at most $2\bar{\beta}$. This implies that CoLSafe stops after at most $n^* + N_{\bar{\beta}} |\mathcal{A}|$ iterations.

\subsubsection{Optimality}
To prove optimality, we must first show that we fully explore the reachable set.
Given that the reachability operator~\eqref{eqn:reachability} is defined in terms of the accuracy $2 \bar{\beta}$, the safe set has to increase eventually unless it is impossible to do so.
\begin{lem}[{\cite[Lemma~7.4]{berkenkamp2021bayesian}}]
    \label{lem:safe_set_increase_i}
    For any $n\ge 0$, if $\bar{R}_{\bar{\beta}} (S_0)\setminus S_n\neq\varnothing$, then $R_{\bar{\beta}}(S_n)\setminus S_n\neq\varnothing$.
\end{lem}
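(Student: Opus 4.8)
The plan is to prove the contrapositive: assuming $R_{\bar{\beta}}(S_n)\setminus S_n = \varnothing$, i.e. $R_{\bar{\beta}}(S_n) = S_n$ (the inclusion $S_n \subseteq R_{\bar{\beta}}(S_n)$ is automatic from the definition~\eqref{eqn:reachability}), I would deduce that $\bar{R}_{\bar{\beta}}(S_0) \subseteq S_n$, which gives $\bar{R}_{\bar{\beta}}(S_0)\setminus S_n = \varnothing$. The key observation is that $R_{\bar{\beta}}(S_n) = S_n$ makes $S_n$ a fixed point of the reachability operator, and since $R_{\bar{\beta}}$ is monotone (Lemma~\ref{lem:set_properties}(v)), iterating it on $S_n$ leaves $S_n$ unchanged: $R_{\bar{\beta}}^k(S_n) = S_n$ for all $k\ge 1$, hence $\bar{R}_{\bar{\beta}}(S_n) = S_n$.

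**Main steps.** First I would record that $S_0 \subseteq S_n$ by Lemma~\ref{lem:set_properties}(iv). Then by monotonicity of the closure operator, Lemma~\ref{lem:set_properties}(vi), this yields $\bar{R}_{\bar{\beta}}(S_0) \subseteq \bar{R}_{\bar{\beta}}(S_n)$. Next I would argue, as above, that the fixed-point hypothesis $R_{\bar{\beta}}(S_n) = S_n$ propagates through the iterates: by induction on $k$, $R_{\bar{\beta}}^k(S_n) = R_{\bar{\beta}}^{k-1}(R_{\bar{\beta}}(S_n)) = R_{\bar{\beta}}^{k-1}(S_n) = S_n$, so the limit defining the closure is the constant sequence and $\bar{R}_{\bar{\beta}}(S_n) = \lim_{k\to\infty} R_{\bar{\beta}}^k(S_n) = S_n$. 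Chaining these gives $\bar{R}_{\bar{\beta}}(S_0) \subseteq S_n$, which is exactly the contrapositive of the claim.

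**Anticipated obstacle.** The only subtle point is making the "limit" in the definition $\bar{R}_{\bar{\beta}}(S)\coloneqq \lim_{n\to\infty}R_{\bar{\beta}}^n(S)$ rigorous and confirming it behaves as expected on a fixed point. Since $\mathcal{A}$ is finite and the iterates $R_{\bar{\beta}}^n(S_0)$ are nested increasing (this monotonicity in $n$ follows from Lemma~\ref{lem:set_properties}(v) applied inductively, starting from $S_0 \subseteq R_{\bar{\beta}}(S_0)$), the limit is simply the eventual stationary value, i.e. the union $\bigcup_{n} R_{\bar{\beta}}^n(S_0)$. For the fixed-point case the sequence is trivially stationary from the start, so the limit is $S_n$. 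This is essentially a formality, so the proof is short; the one thing to be careful about is invoking the correct monotonicity lemma at each stage rather than conflating monotonicity in the set argument with monotonicity in the iteration index.
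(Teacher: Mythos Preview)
Your argument is correct: the contrapositive together with the fixed-point observation $R_{\bar{\beta}}(S_n)=S_n\Rightarrow \bar{R}_{\bar{\beta}}(S_n)=S_n$, the inclusion $S_0\subseteq S_n$ from Lemma~\ref{lem:set_properties}(iv), and the monotonicity of the closure from Lemma~\ref{lem:set_properties}(vi) yield $\bar{R}_{\bar{\beta}}(S_0)\subseteq S_n$ exactly as you outline. The paper does not give its own proof of this lemma---it is simply quoted from \cite[Lemma~7.4]{berkenkamp2021bayesian}---and the argument there is precisely the contrapositive/fixed-point reasoning you reproduce, so your approach matches the source.
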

\begin{lem}
    \label{lem:safe_set_increase_ii}
    With probability at least $1 - \delta$, if, for some $n \geq 1$, $S_{n-1} = S_n$ and $w_n(a,i) \le 2\bar{\beta}$ for all $a\in M_n \cup G_n$ and $i\in\mathcal{I}$, then $\bar{R}_{\bar{\beta}}(S_0)\subseteq S_n$.
\end{lem}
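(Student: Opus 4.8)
The plan is to argue by contradiction within the conditional statement. I would assume both premises ($S_{n-1}=S_n$, and $w_n(a,i)\le 2\bar{\beta}$ for all $a\in M_n\cup G_n$ and $i\in\mathcal{I}$) and suppose, for contradiction, that $\bar R_{\bar{\beta}}(S_0)\not\subseteq S_n$. I would work throughout on the probability-$(1-\delta)$ event on which the conclusion of Lemma~\ref{lem:bounds} holds for all $n$ and all $a\in\mathcal{A}$; exactly as in the safety part of the proof, this event yields $l_n(a,i)\le\norm{g(a,i)}\le u_n(a,i)$ for every $a\in\mathcal{A}$ and $i\in\mathcal{I}_\mathrm{g}$, so this is the only point at which probability enters.

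Next I would invoke Lemma~\ref{lem:safe_set_increase_i}: since $\bar R_{\bar{\beta}}(S_0)\setminus S_n\neq\varnothing$, we get $R_{\bar{\beta}}(S_n)\setminus S_n\neq\varnothing$, hence there is some $\tilde a\in R_{\bar{\beta}}(S_n)$ with $\tilde a\notin S_n$. Unfolding the definition~\eqref{eqn:reachability} of $R_{\bar{\beta}}$ and using $\tilde a\notin S_n$, for every $i\in\mathcal{I}_\mathrm{g}$ there is a witness $z_i\in S_n$ with $\norm{g(z_i,i)}-2\bar{\beta}-L\norm{\tilde a-z_i}\ge c_i$.

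Then I would show each $z_i$ is a potential expander. From $\norm{g(z_i,i)}\le u_n(z_i,i)$ we obtain $u_n(z_i,i)-L\norm{z_i-\tilde a}\ge c_i+2\bar{\beta}\ge c_i$; since $z_i\in S_n$ and $\tilde a\in\mathcal{A}\setminus S_n$, definition~\eqref{eqn:expanders} gives $z_i\in G_n\subseteq M_n\cup G_n$. The hypothesis then forces $w_n(z_i,i)\le 2\bar{\beta}$, so $l_n(z_i,i)=u_n(z_i,i)-w_n(z_i,i)\ge\norm{g(z_i,i)}-2\bar{\beta}$. Substituting this into the reachability inequality yields $l_n(z_i,i)-L\norm{\tilde a-z_i}\ge c_i$ for every $i\in\mathcal{I}_\mathrm{g}$, with $z_i\in S_n=S_{n-1}$; by the safe-set update~\eqref{eqn:safe_set_update} this is precisely the condition certifying $\tilde a\in S_n$, contradicting $\tilde a\notin S_n$. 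Hence $\bar R_{\bar{\beta}}(S_0)\subseteq S_n$.

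The argument is essentially bookkeeping, and I do not anticipate a real obstacle; the one subtlety worth flagging is the norm. The operators~\eqref{eqn:reachability} and~\eqref{eqn:safe_set_update} are phrased in terms of the scalar $\norm{g(\cdot,i)}$, whereas Lemma~\ref{lem:bounds} controls the vector-valued error, so I must make the reverse-triangle step $\abs{\norm{\mu_n(a,i)}-\norm{g(a,i)}}\le\norm{\mu_n(a,i)-g(a,i)}\le\beta_n(a,i)$ explicit, and note that the intersection defining $C_n$ preserves membership of $\norm{g(a,i)}$ in the resulting interval. A second minor point is that the expander witness $z_i$ may depend on the index $i$, but this is harmless since~\eqref{eqn:safe_set_update} already allows a distinct witness for each constraint index.
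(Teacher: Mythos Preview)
Your proposal is correct and follows essentially the same contradiction argument as the paper's own proof: invoke Lemma~\ref{lem:safe_set_increase_i}, pick a point $\tilde a\in R_{\bar\beta}(S_n)\setminus S_n$ with witnesses $z_i\in S_n$, show each $z_i\in G_n$ via the upper bound, use $w_n(z_i,i)\le 2\bar\beta$ to pass to the lower bound, and conclude $\tilde a\in S_n$ from~\eqref{eqn:safe_set_update}. Your added remarks on the reverse-triangle step and on the index-dependence of $z_i$ are valid refinements that the paper leaves implicit.
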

\begin{proof}
    Assume, to the contrary, that $\bar{R}_{\bar{\beta}}(S_0)\setminus S_n \neq \varnothing$.
    By Lemma~\ref{lem:safe_set_increase_i}, we have that $R_{\bar{\beta}}(S_n)\setminus S_n\neq\varnothing$.
    Equivalently, by definition, for all $i\in\mathcal{I}_\mathrm{g}$:
\begin{equation}
        \label{eqn:safe_set_increase}
        \exists a\in R_{\bar{\beta}}(S_n)\setminus S_n, \exists z_i\in S_n\colon \norm{g(z_i, i)} - 2\bar{\beta} - L\norm{z_i-a}\ge c_i.
    \end{equation}
    Then, for $a$ and $\{z_i\}$ from \eqref{eqn:safe_set_increase} it holds that, for all $i\in\mathcal{I}_\mathrm{g}$,
    \begin{align*}
        u_n(z_i,i) - L\norm{z_i-a} &\ge \norm{g(z_i, i)} - L\norm{z_i-a}\tag*{by Lemma~\ref{lem:bounds}}\\
        &\ge \norm{g(z_i,i)} - 2\bar{\beta} - L\norm{z_i-a}\\
        &\ge c_i \tag*{by~\eqref{eqn:safe_set_increase}.}
    \end{align*}
    Therefore, by the definition of $G_n$ in \eqref{eqn:expanders}, $z_i\in G_n \subseteq S_n$ for all $i\in\mathcal{I}_\mathrm{g}$.

    Finally, since $S_{n-1}=S_n$, we have that $z_i\in S_{n-1}$ for all $i\in\mathcal{I}_\mathrm{g}$. Then, the assumption that $w_n(z_i,i) \le 2\bar{\beta}$ for all $i\in\mathcal{I}_\mathrm{g}$ implies that, for all $i\in\mathcal{I}_\mathrm{g}$,
    \begin{align*}
        &l_n(z_i,i) - L\norm{z_i-a} \\
        &\quad \ge \norm{g(z_i,i)} - w_n(z_i,i) - L\norm{z_i - a} \tag*{by Lemma~\ref{lem:bounds}}\\
        &\quad \ge \norm{g(z_i,i)} - 2\bar{\beta} - L\norm{z_i - a} \\
        &\quad \ge c_i \tag*{by~\eqref{eqn:safe_set_increase}.}
    \end{align*}
    This means that $a\in S_n$, which is a contradiction.
\end{proof}

From the previous result, we get that once \colsafe stops, it will have explored the safely reachable set $\bar{R}_{\bar{\beta}}(S_0)$ to the desired accuracy. 
From this it follows that the pessimistic estimate in~\eqref{eqn:optimum} is also $2\bar{\beta}$-close to the optimum value within the safely reachable set, $\bar{R}_{\bar{\beta}}(S_0)$.
\begin{lem}
    \label{lem:epsilon_optimal}
    With probability at least $1-\delta$, if, for some $n \geq 1$, $S_{n-1} = S_n$ and $w_n(a,i) \le 2\bar{\beta}$ for all $a\in M_n \cup G_n$ and $i\in\mathcal{I}$, we have that
    \[
        f(\hat{a}_n) \ge \max_{a\in\bar{R}_{\bar{\beta}}(S_0)}f(a) - 2\bar{\beta}.
    \]
\end{lem}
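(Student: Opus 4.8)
The plan is to combine Lemma~\ref{lem:safe_set_increase_ii} with the confidence‑interval containment implied by Lemma~\ref{lem:bounds}, and then to split on whether the safe optimum still belongs to the current set of potential maximizers. First I would invoke Lemma~\ref{lem:safe_set_increase_ii}: under the stated stopping condition, with probability at least $1-\delta$ we have $\bar{R}_{\bar{\beta}}(S_0) \subseteq S_n$. Put $a^* \coloneqq \argmax_{a \in \bar{R}_{\bar{\beta}}(S_0)} f(a)$, so that $a^* \in S_n$ and $\max_{a \in \bar{R}_{\bar{\beta}}(S_0)} f(a) = f(a^*)$. On the same probability‑$(1-\delta)$ event as in Lemma~\ref{lem:bounds} we also have $l_n(a,0) \le f(a) \le u_n(a,0)$ for every $a \in \mathcal{A}$ and every $n$ (this is Lemma~\ref{lem:bounds} together with the fact that $C_n$ is an intersection of the $Q_t$'s), and, by~\eqref{eqn:optimum}, $\hat{a}_n = \argmax_{a \in S_n} l_n(a,0)$, hence $l_n(\hat{a}_n,0) = \max_{a' \in S_n} l_n(a',0) \ge l_n(a^*,0)$ since $a^* \in S_n$.

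Then I would distinguish two cases. \emph{Case 1: $a^* \in M_n$.} By the stopping hypothesis $w_n(a^*,0) \le 2\bar{\beta}$, so, using $w_n = u_n - l_n$, $f(\hat{a}_n) \ge l_n(\hat{a}_n,0) \ge l_n(a^*,0) = u_n(a^*,0) - w_n(a^*,0) \ge u_n(a^*,0) - 2\bar{\beta} \ge f(a^*) - 2\bar{\beta}$, which is the claim. \emph{Case 2: $a^* \notin M_n$.} Since $a^* \in S_n$, the defining inequality of $M_n$ in~\eqref{eqn:maximizers} must fail, i.e.\ $u_n(a^*,0) < \max_{a' \in S_n} l_n(a',0) = l_n(\hat{a}_n,0)$; combining with the confidence bounds, $f(a^*) \le u_n(a^*,0) < l_n(\hat{a}_n,0) \le f(\hat{a}_n)$, so in fact $f(\hat{a}_n) > f(a^*) \ge f(a^*) - 2\bar{\beta}$. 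In both cases $f(\hat{a}_n) \ge f(a^*) - 2\bar{\beta} = \max_{a \in \bar{R}_{\bar{\beta}}(S_0)} f(a) - 2\bar{\beta}$, completing the proof.

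The step that needs a little care — and the reason the case split is unavoidable — is that the stopping rule controls the width $w_n$ only on $M_n \cup G_n$, not on all of $S_n$, and one cannot assume a priori that either $\hat{a}_n$ or $a^*$ lies in $M_n$ (indeed $S_n$ may strictly contain $\bar{R}_{\bar{\beta}}(S_0)$ at termination). The case split sidesteps this: either $a^*$ is still a candidate maximizer, so its confidence interval is narrow enough to transfer near‑optimality to the reported $\hat{a}_n$, or $a^*$ has already been dropped from $M_n$, in which case $\hat{a}_n$ provably dominates it. Finally, this lemma together with the termination bound $n^* + N_{\bar{\beta}}|\mathcal{A}|$ established in the previous subsection gives the optimality part of Theorem~\ref{thm:safety}; since safety, Lemma~\ref{lem:safe_set_increase_ii} and Lemma~\ref{lem:epsilon_optimal} all hold on the single event of Lemma~\ref{lem:bounds}, no further union bound is required.
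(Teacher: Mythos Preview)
Your proof is correct, but the route differs from the paper's in one notable way. You define $a^*$ as the maximizer of $f$ over $\bar{R}_{\bar{\beta}}(S_0)$ and then split on whether $a^*\in M_n$; the paper instead defines $a^*$ as the maximizer of $f$ over the (generally larger) set $S_n$, and with that choice one can show $a^*\in M_n$ outright: for every $a\in S_n$, $u_n(a^*,0)\ge f(a^*)\ge f(a)\ge l_n(a,0)$ by Lemma~\ref{lem:bounds}, so $u_n(a^*,0)\ge\max_{a'\in S_n}l_n(a',0)$. The width bound on $M_n$ then applies directly, and Lemma~\ref{lem:safe_set_increase_ii} is invoked only at the end to pass from $S_n$ back to $\bar{R}_{\bar{\beta}}(S_0)$. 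Thus your remark that the case split is ``unavoidable'' is not quite accurate: it is unavoidable for \emph{your} choice of $a^*$, but not in general. What your approach buys is a slightly more transparent direct chain in Case~1 (the paper argues the analogous inequality by contradiction), and your Case~2 gives the pleasant observation that if the safe optimum has already been pruned from $M_n$ then $\hat{a}_n$ strictly dominates it; what the paper's approach buys is economy, since no case distinction is needed. Your concluding remark about the union bound is also correct and in line with the paper.
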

\begin{proof}
    Let $a^*\coloneqq \argmax_{a\in S_n}f(a)$.
    Note that $a^*\in M_n$ since, for all $a \in S_n$,
    \begin{align*}
        u_n(a^*,0)&\ge f(a^*) \tag*{by Lemma~\ref{lem:bounds}}\\
        &\ge f(a) \\
        &\ge l_n(a,0) \tag*{by Lemma~\ref{lem:bounds},}
    \end{align*}
    thus $u_n(a^*,0) \ge \max_{a\in S_n}l_n(a,0)$.
    
    We will next show that $f(\hat{a}_n)\ge f(a^*)-2\bar{\beta}$.
    Assume, to the contrary, that
    \begin{equation}
        \label{eqn:optimality_contrary}
        f(\hat{a}_n) < f(a^*) - 2\bar{\beta}.
    \end{equation}
    Then, we have
    \begin{align*}
        l_n(a^*,0) &\le l_n(\hat{a}_n,0)\\
        &\le f(\hat{a}_n)\tag*{by Lemma~\ref{lem:bounds}}\\
        &< f(a^*)-2\bar{\beta} \tag*{by~\eqref{eqn:optimality_contrary}}\\
        &\le u_n(a^*,0) - 2\bar{\beta} \tag*{by Lemma~\ref{lem:bounds}}\\
        &\le l_n(a^*,0),
    \end{align*}
    where the last inequality follows from the fact that $a^* \in M_n$, so $w_n(a^*, i) \le 2 \bar{\beta}$ for all $i \in \mathcal{I}$.
    This statement is a contradiction.

    Finally, Lemma~\ref{lem:safe_set_increase_ii} implies that $\bar{R}_{\bar{\beta}}(S_0)\subseteq S_n$.
    Therefore,
    \begin{align*}
        \max_{a\in\bar{R}_{\bar{\beta}}(S_0)}f(a)-2\bar{\beta} &\le \max_{a\in S_n}f(a) - 2\bar{\beta} \\
        &= f(a^*) - 2\bar{\beta} \\
        &\le f(\hat{a}_n).
    \end{align*}
\end{proof}
\section*{Acknowledgements}

The authors would like to thank Bhavya Sukhija for support with the robot experiments and simulations.

\bibliographystyle{IEEEtran}
\bibliography{IEEEabrv,ref}

\begin{IEEEbiography}[{\includegraphics[width=1in,height=1.25in,clip,keepaspectratio]{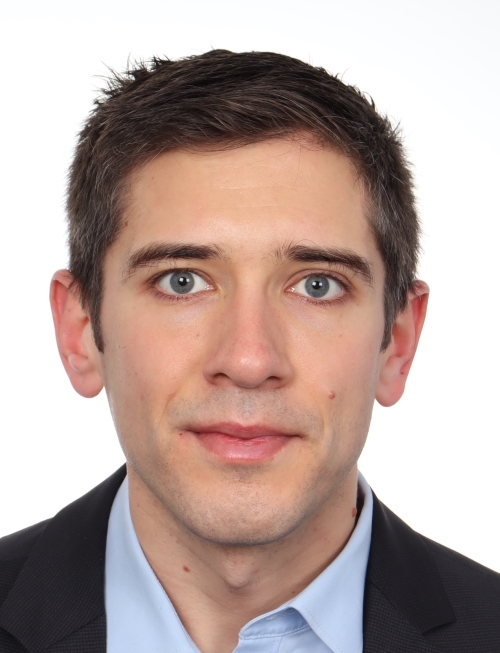}}]{Dominik Baumann}(Member, IEEE) received the Dipl.-Ing.\ degree in electrical engineering from the Dresden University of Technology, Dresden, Germany, in 2016 and the Ph.D.\ degree in electrical engineering from KTH Stockholm, Stockholm, Sweden, in 2020.
He is currently an Assistant Professor with Aalto University, Espoo, Finland. 
He was a joint Ph.D.\ student with the Max Planck Institute for Intelligent Systems, Stuttgart/Tübingen, Germany, and KTH Stockholm. 
After his Ph.D., he was a Postdoctoral Researcher with RWTH Aachen University, Aachen, Germany, and Uppsala University, Uppsala, Sweden. 
His research interests revolve around learning and control for networked multiagent systems.
Dr.\ Baumann was the recipient of the best paper award at the 2019 ACM/IEEE International Conference on Cyber-Physical Systems, the best demo award at the 2019 ACM/IEEE International Conference on Information Processing in Sensor Systems, and the future award of the Ewald Marquardt Foundation.
He is currently co-chair of the Finland section of the joint chapter of the IEEE Control Systems, Robotics and Automation, and Systems, Man, and Cybernetics communities.
\end{IEEEbiography}
\begin{IEEEbiography}
[{\includegraphics[width=1in,height=1.25in,clip,keepaspectratio]{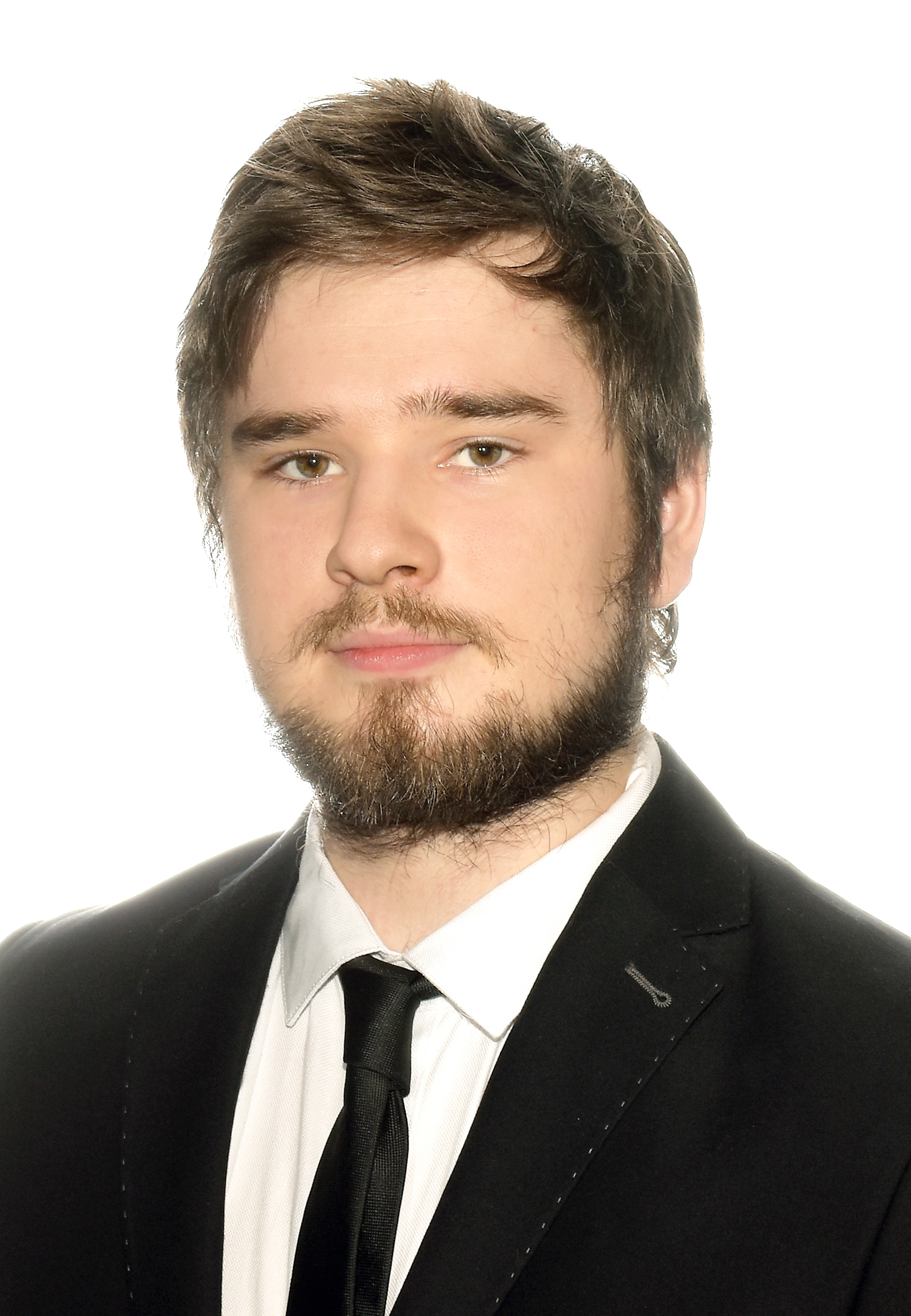}}]{Krzysztof Kowalczyk} 
received M.S. degree in control engineering and robotics in 2021 from Wroclaw University of Science and Technology, Poland. Since 2023, he has been a Ph.D candidate in the Doctoral School of Wroclaw University of Science and Technology. In his research, he focuses on nonparametric estimation techniques with high probability guarantees and their applications to multi-agent systems. 
\end{IEEEbiography}
\begin{IEEEbiography}
[{\includegraphics[width=1in,height=1.25in,clip,keepaspectratio]{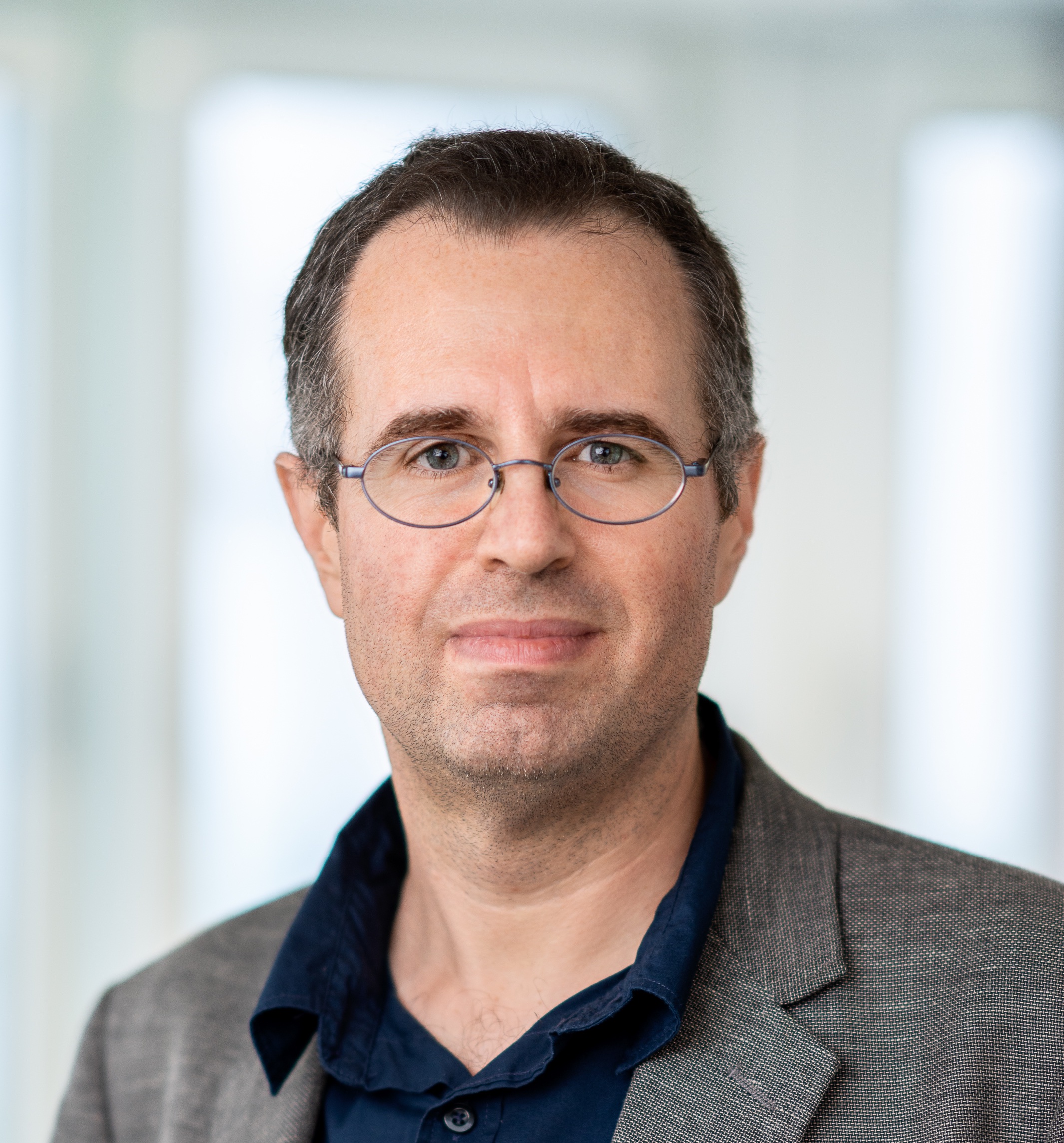}}]{Cristian R. Rojas} (Senior Member, IEEE) 
was born in 1980. He received the M.S. degree in electronics engineering from the Universidad T\'ecnica Federico Santa Mar\'\i a, Valpara\'\i so, Chile, in 2004, and the Ph.D. degree in electrical engineering at The University of Newcastle, NSW, Australia, in 2008. Since October 2008, he has been with the Royal Institute of Technology, Stockholm, Sweden, where he is currently Professor of Automatic Control at the School of Electrical Engineering and Computer Science. His research
interests lie in system identification, signal processing and machine learning. Prof. Rojas is a member of the IEEE Technical Committee on System Identification and Adaptive
Processing (since 2013), and of the IFAC Technical Committee TC1.1. on Modelling, Identification, and Signal Processing (since 2013). He has been Associate Editor for the IFAC journal Automatica and the IEEE Control Systems Letters (L-CSS).
\end{IEEEbiography}
\begin{IEEEbiography}
[{\includegraphics[width=1in,height=1.25in,clip,keepaspectratio]{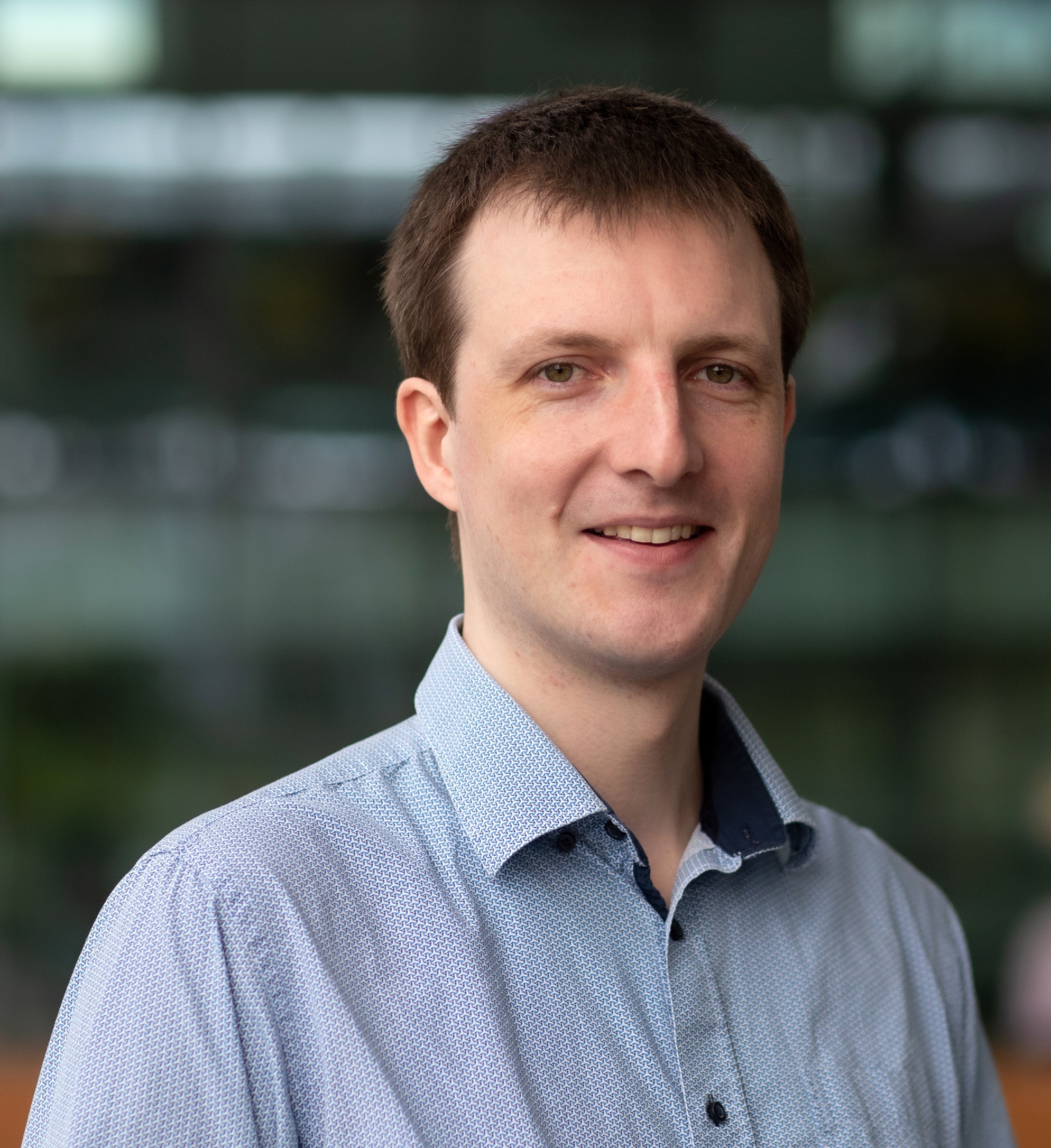}}]{Koen Tiels} (Member, IEEE) received the master’s degree in electromechanical engineering and the Ph.D. degree in engineering from the Vrije Universiteit Brussel (VUB), Brussels, Belgium, in July 2010 and March 2015, respectively. He was a Post-Doctoral Researcher with VUB from 2015 to 2018. From February 2018 to January 2020, he was with the Department of Information Technology, Division of Systems and Control, Uppsala University, Uppsala, Sweden, as a Post-Doctoral Researcher. From February 2020 to December 2022, he was an Assistant Professor with the Control Systems Technology (CST) Group, Eindhoven University of Technology (TU/e), Eindhoven, The Netherlands, where he is currently a Docent at the Department of Mechanical Engineering. His main research interests are in the field of nonlinear system identification.
\end{IEEEbiography}
\begin{IEEEbiography}
[{\includegraphics[width=1in,height=1.25in,clip,keepaspectratio]{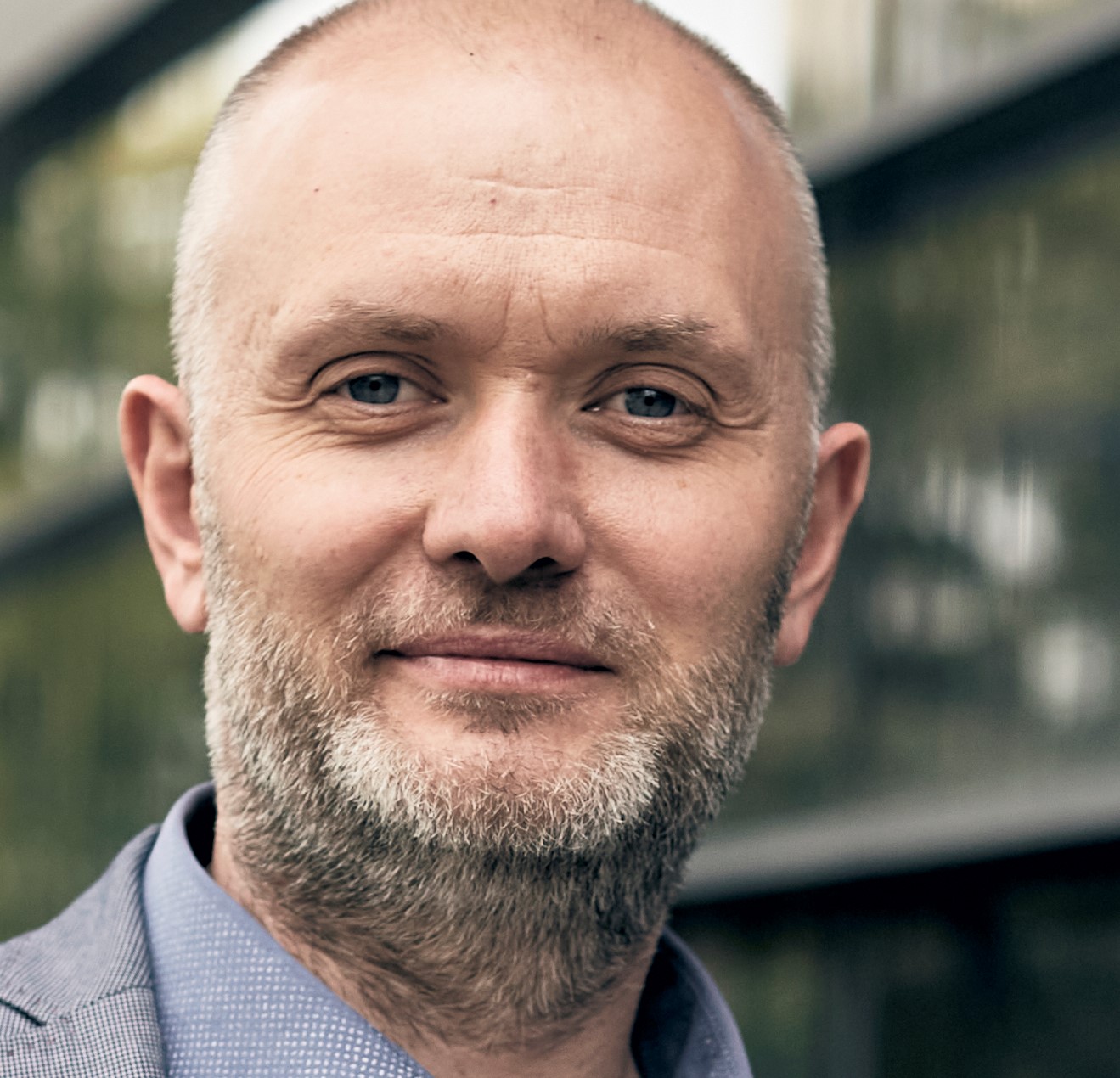}}]{Pawel Wachel} (Member, IEEE) 
     was born in 1980. He received the M.S. degree in control engineering and robotics in 2004 and the Ph.D. degree in control engineering and robotics in 2008, respectively, from Wroclaw University of Technology, Poland. 
     Since 2008, he has been with the Wroclaw University of Science and Technology, where he is currently an associate professor of computer science at the Faculty of Information and Communication Technology. 
     During his career, prof. Wachel was a postdoctoral researcher with the neuro-engineering lab (BrainLab), part of the Biomedical Signal Processing Group at the Faculty of Fundamental Problems of Technology, WUST, Poland.
     In his research, prof. Wachel focuses on system identification under limited prior knowledge, nonasymptotic aspects of machine learning, and signal processing.
\end{IEEEbiography}

\end{document}